\documentclass[aip,reprint,jcp,amsmath,amssymb]{revtex4-1}
\usepackage[T1]{fontenc}
\usepackage{amsthm}
\usepackage{amsthm}
\usepackage{enumitem}
\usepackage{color}
\usepackage{graphicx}
\usepackage{hyperref}
\usepackage[capitalise]{cleveref}
\usepackage[english]{babel}
\usepackage{gensymb}
\def\W{\mathcal{W}}
\def\dt{\Delta t}
\def\manq{\Sigma}
\def\ds{\displaystyle}
\def\E{\mathbb{E}}
\def\R{\mathbb{R}}
\def\ph{\varphi}
\def\one{\mathbf{1}}
\def\dps{\displaystyle}
\newcommand{\pare}[1]{ \left(#1\right) }

\newtheorem{theorem}{Theorem}[section]
\newtheorem{lem}[theorem]{Lemma}
\theoremstyle{remark}
\newtheorem{rem}[theorem]{Remark}
\numberwithin{equation}{section}
\crefname{rem}{Remark}{Remarks}
\Crefname{rem}{Remark}{Remarks}
\usepackage[normalem]{ulem}
\begin{document}

\title{Efficient Monte-Carlo sampling of metastable systems using non-local collective variable updates} 

\author{Christoph Schönle}
\affiliation{CMAP, CNRS, École polytechnique, Institut Polytechnique de Paris, 91120 Palaiseau, France}
\affiliation{Laboratoire de Physique de l’École normale supérieure ENS, Université PSL, CNRS, Sorbonne Université, Université de Paris, 75005 Paris, France}

\author{Davide Carbone}
\affiliation{Laboratoire de Physique de l’École normale supérieure ENS, Université PSL, CNRS, Sorbonne Université, Université de Paris, 75005 Paris, France}

\author{Marylou Gabrié}
\affiliation{Laboratoire de Physique de l’École normale supérieure ENS, Université PSL, CNRS, Sorbonne Université, Université de Paris, 75005 Paris, France}

\author{Tony Lelièvre}
\affiliation{CERMICS, CNRS, ENPC, Institut Polytechnique de Paris, Marne-la-Vallée, France}
\affiliation{MATHERIALS team-project, Inria Paris, France}

\author{Gabriel Stoltz}
\affiliation{CERMICS, CNRS, ENPC, Institut Polytechnique de Paris, Marne-la-Vallée, France}
\affiliation{MATHERIALS team-project, Inria Paris, France}

\date{\today}

\begin{abstract}
Monte-Carlo simulations are widely used to simulate complex molecular systems, but standard approaches suffer from metastability. Lately, the use of non-local proposal updates in a collective-variable (CV) space has been proposed in several works. Here, we generalize these approaches and explicitly spell out an algorithm for non-linear CVs and underdamped Langevin dynamics. We prove reversibility of the resulting scheme and demonstrate its performance on several numerical examples, observing a substantial performance increase compared to methods based on overdamped Langevin dynamics as considered previously. Advances in generative machine-learning-based proposal samplers now enable efficient sampling in CV spaces of intermediate dimensionality (tens to hundreds of variables), and our results extend their applicability toward more realistic molecular systems.
\end{abstract}

\maketitle 
\section{Introduction}

Monte-Carlo simulations promise to give access to the equilibrium properties of physical systems even when the numerical simulation of their relaxation dynamics is prohibitively expensive. In practice, however, standard Markov Chain Monte Carlo methods (MCMC) rely on physically-inspired local moves and struggle to equilibrate across metastable states, as does molecular dynamics.

Enhanced sampling algorithms go beyond the naive Markov Chain Monte Carlo approach to tackle metastability, see Ref.~\onlinecite{Henin_Lelievre_Shirts_Valsson_Delemotte_2022} for a review in the context of molecular dynamics. A first class of methods originated from the observation that the metastability is removed at sufficiently high temperatures, which led to the development of algorithms like parallel tempering, annealed importance sampling, and sequential Monte Carlo \cite{swendsenReplicaMonteCarlo1986,nealAnnealedImportanceSampling2001,delmoralSequentialMonteCarlo2006,woodard_conditions_2009,syed_non-reversible_2022}. While known to work on a wide range of problems, they spend a substantial amount of computation time at temperatures not of direct interest and generally require a careful choice of parameters (e.g. for parallel tempering the temperature ladder and the frequency of exchange attempts).

Another important category of algorithms makes use of a collective variable (CV), a lower-dimensional representation of the system that is supposed to capture the metastability of the system. A `good' CV that resolves the metastability can be used for biasing simulations to bridge modes, using enhanced sampling algorithms such as umbrella sampling\cite{torrie-valleau-77} and related histogram methods\cite{KBSKR92,SC08}, thermodynamic integration\cite{kirkwood-35} or free energy adaptive biasing methods\cite{darve-pohorille-01,henin-chipot-04,laio-parrinello-02,barducci-bussi-parrinello-08}. However, identifying a suitable CV in low dimension is by no means trivial and thus poses the main limitation to this class of methods.

In recent years, another approach has emerged, which is exploiting generative machine learning models. Among the various generative models, normalizing flows\cite{papamakarios2021} are particularly well-suited for the purposes of MCMC sampling since their tractable probability density allows to employ them as a non-local proposal density of physical configurations, where any imperfections in training are corrected for via a Metropolis-Hastings accept/reject scheme or self-normalized importance sampling\cite{noeBoltzmannGeneratorsSampling2019,greniouxSamplingApproximateTransport2023b}. Thanks to the impressive flexibility of neural networks, proofs of concept have been made in different domains ranging from molecular systems to field theories in quantum chromodynamics\cite{liNeuralNetworkRenormalization2018,noeBoltzmannGeneratorsSampling2019,gabrieAdaptiveMonteCarlo2022,invernizzi_skipping_2022,abbott_sampling_2022}. However, the limiting factor to scale these approaches appears to be the ability of the generative model to be precise enough either when the dimension of the problem becomes of the order of a few thousands or when the distribution has singularities\cite{deldebbioEfficientModellingTrivializing2021,greniouxSamplingApproximateTransport2023b,schonleOptimizingMarkovChain2023,abbottAspectsScalingScalability2022}.Other recent approaches make use for example of transformers~\cite{bera_accurate_2025} and diffusion models~\cite{lewisScalableEmulationProtein2025b,bera_how_2025}, but these do not give easy access to the underlying probability density and are therefore less suitable for MCMC sampling than normalizing flows.

A promising approach is therefore to combine the strength of CV-enhanced sampling and generative models (normalizing flows) with a CV of `intermediate' dimension (from a dozen to a few hundred degrees of freedom). Within this dimensionality, identifying a CV such that it resolves the metastability will be much easier than finding a one- to three-dimensional representation required by most classical CV-based approaches. Conversely, a normalizing flow in an `intermediate' dimension can be trained to much higher accuracy than if one attempted to learn the distribution of the entire system. Since the overall goal is still sampling full configurations of the system, the question arises of how to employ such a proposal sampler in CV space to obtain an unbiased sampler for the full distribution. This has been addressed in multiple works\cite{athenesComputationChemicalPotential2002,nilmeier2011nonequilibrium,chen_enhanced_2015,chen2015generalized,neal2005taking}, in particular under the names of ``Hybrid Nonequilibrium Molecular
Dynamics'' (HNMD)\cite{chen_enhanced_2015,chen2015generalized} and ``Nonequilibrium candidate Monte Carlo'' (NCMC) \cite{nilmeier2011nonequilibrium} in the context of molecular dynamics, but also in a Bayesian setting\cite{karagiannis_annealed_2013}. The central idea is always to drive the system from one value of the CV to another, a procedure referred to as steering in the following, and then accept or reject the resulting proposal in the full-space. A similar procedure has also been used in the simulation of glassy system with shear deformations\cite{das_annealing_2022}. In a previous article\cite{SCHONLE2025113806}, we described such an algorithm with driven dynamics in continuous time
and identified various time-discretizations for which we proved reversibility with respect to the equilibrium measure. Notably, we 
also related this to the Jarzynski and the Jarzynski--Crooks equalities
\cite{jarzynski-97,Crooks99}, clarifying the algorithm design. In a parallel work\cite{tamagnoneCoarseGrainedMolecularDynamics2024}, one version of these algorithms was applied to a toy molecular system, using a normalizing flow as the CV sampler and including an adaptive protocol to train it alongside the MCMC procedure.

However, both these works considered only overdamped Langevin dynamics for the steering and required the CV to be a subset of the Euclidean coordinates giving the conformation of the system. In the present article, we generalize these works by presenting an algorithm for a non-linear collective variable and using general underdamped Langevin dynamics. While in spirit this algorithm already fits into the paradigm of some of the previous works like NCMC and HNMD, we present here an algorithm that is readily implemented and treat all technical details arising from the non-linear CV and general Langevin dynamics. We explicitly prove that the associated Markov chains are reversible with respect to the target measure and present its application to several example systems of increasing complexity. In all of them, we see that the deterministic (Hamiltonian) steering dynamics exhibit the best performance, with an improvement of up to two orders of magnitude compared to overdamped dynamics.

The main part of the article divides into two sections: We explain all the theoretical aspects of the problem and the algorithm in \cref{sec:sampling_algo} and go on to demonstrate its properties through several numerical examples in \cref{sec:numerics}. Within \cref{sec:sampling_algo}, the sampling problem is introduced in \cref{subsec:target}, followed by a detailed explanation of the algorithm in \cref{subsec:algorithm}. We state our main theoretical result on the reversibility of the algorithm in \cref{subsec:reversibility} and conclude in \cref{subsec:Parameterization} with a normalized version of the algorithm and comments on the parameterization. In \cref{sec:numerics}, we analyze the performance of the algorithm by means of numerical simulations and present results on four different model systems of increasing complexity in \cref{subsec:gauss_tunnel,subsec:phi4,subsec:dimer,subsec:polymer}. We conclude the article and give an outlook on future directions in \cref{sec:conclusion}.

\section{Sampling with a non-linear collective variable}
\label{sec:sampling_algo}
\begin{figure*}
    \includegraphics{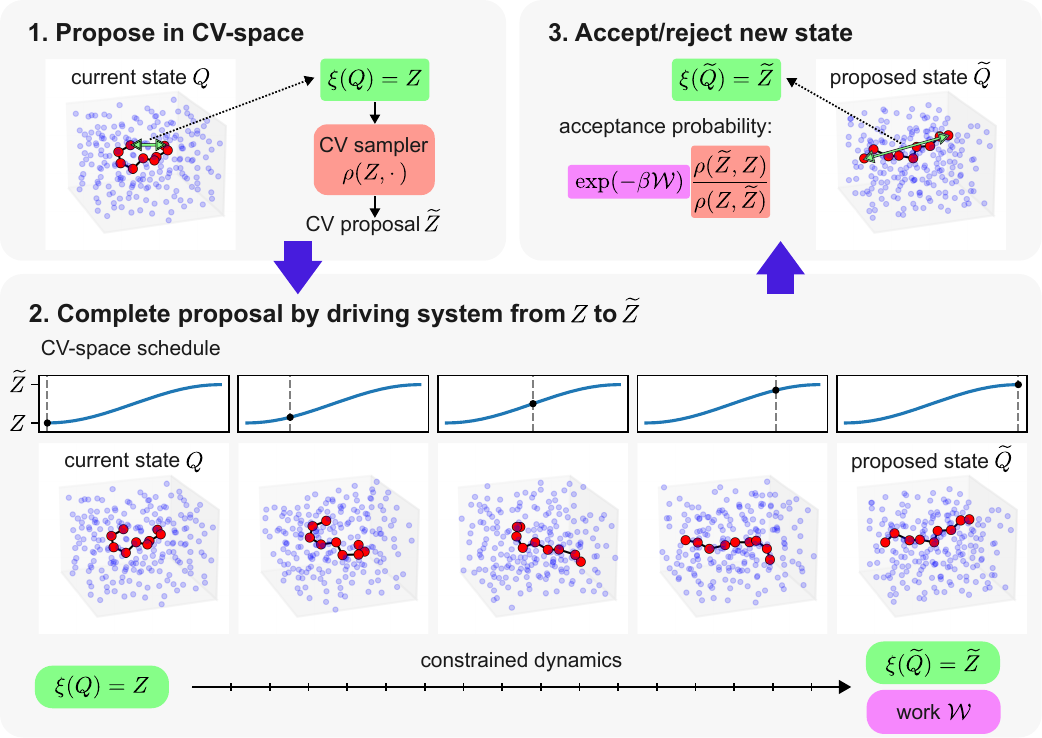}
    \caption{Schematic representation of the algorithm for the example of a polymer, here with the end-to-end distance as the CV.}
    \label{fig:overviewAlgo}
\end{figure*}
\subsection{Target distribution and collective variable}
\label{subsec:target}
We begin this section by outlining the sampling problem at hand.
Our goal is to sample from the Boltzmann--Gibbs probability measure 
\begin{align}
    \nu(\mathrm{d}q) = \mathcal{Z}_q^{-1} e^{-\beta V(q)}\mathrm{d}q
    \label{eq:boltzmanntarget}
\end{align} over positions $q\in\mathbb{R}^d$ of dimension $d$, with the energy function $V:\mathbb{R}^d\rightarrow\mathbb{R}$, the inverse temperature $\beta=(k_BT)^{-1} > 0$ and the normalizing constant $\mathcal{Z}_q$. Let us assume that a collective variable (CV) $\xi:\mathbb{R}^d \rightarrow \mathbb{R}^\ell$ with $\ell<d$ has been identified and we have a `good' sampler $\rho$ (close to the target distribution in CV space) to propose moves in CV space. As mentioned in the introduction, the question of how to use these two ingredients to construct an unbiased MCMC sampler in the full space $\mathbb{R}^d$ has previously been addressed in literature\cite{athenesComputationChemicalPotential2002,nilmeier2011nonequilibrium,chen_enhanced_2015,chen2015generalized,neal2005taking}, but it is not obvious how to employ these schemes if the map $\xi$ is non-linear. Notably, the geometry of the level sets $\Sigma(z)$ defined as
\begin{align}
\label{eq:Sigmaz}
    \Sigma(z)  = \left\{q\in\mathbb{R}^d \,\Big | \, \xi(q) = z \right\}
\end{align}
must be taken into consideration. We present an algorithm which explicitly treats the case of a non-linear CV in the following section. Much of the notation throughout this paper is based on Refs.~\onlinecite{lelievre-rousset-stoltz-book-10,lelievre_langevin_2012} and we refer the reader to those for an in-depth introduction on the  mathematical concepts used here.

Even though we are only interested in sampling $\nu(\mathrm{d}q)$, it will turn out to be helpful to introduce also an auxiliary momentum variable $p$ and define the Boltzmann--Gibbs measure in phase space $(q,p)$ by 
\begin{align}
    \mu(\mathrm{d} q \,\mathrm{d} p) = \mathcal{Z}^{-1} e^{-\beta H(q,p)} \mathrm{d} q \, \mathrm{d} p
\end{align} where $\mathcal{Z}$ is the normalizing constant, ${H(q,p) = \frac{1}{2} p^\top M^{-1} p + V(q)}$ is the Hamiltonian, and $M\in \mathbb{R}^{d\times d}$ is a symmetric positive definite mass matrix.
\subsection{The algorithm}
\label{subsec:algorithm}
To explain the algorithm, we first present its overall structure in \cref{subsubsec:algo}, followed by the necessary ingredients and considerations in \cref{subsubsec:schedule,subsubsec:momentum_init,subsubsec:Rattle}. 

\subsubsection{Overall structure}
\label{subsubsec:algo}
The algorithm builds an MCMC chain $(Q_n)_{n\geq 0}$ in position space by iterating over $n \in \mathbb{N}$ the following steps.
\noindent
Starting from a state $Q_n$ with collective variable value $Z_n := \xi(Q_n)$, generate $Q_{n+1}$ as follows:
\begin{enumerate}[label=\textbf{Step \arabic*}:, leftmargin=*]
    \item Propose a move in the collective variable space:
      \[
      \widetilde{Z}_{n+1} \sim \rho(Z_n,z') \, {\rm d}z',
      \]
    where $\rho(z,z') \in \R^\ell \times \R^\ell \to \R_+$ denotes a Markov kernel with probability density $\rho(z,z')$ to jump to $z'$ starting at $z$. 
    \item Generate a new $\widetilde{Q}_{n+1} \in \mathbb{R}^d$ that satisfies $\xi(\widetilde{Q}_{n+1}) = \widetilde{Z}_{n+1}$ by driving the system in CV space from $Z_n$ to $\widetilde{Z}_{n+1}$. To this end, 
    \begin{itemize}
        \item[(i)]  specify a schedule for the positions and velocities in CV space to fix the transition path (\cref{subsubsec:schedule}),
        \item[(ii)] initialize the position $q^0 = Q_n$ and sample an initial auxiliary momentum $p^0$ (\cref{subsubsec:momentum_init}),
        \item[(iii)] use this to follow (Langevin) dynamics constrained in CV space by the schedule (\cref{subsubsec:Rattle}),
    \end{itemize}  
    all of these steps to be specified below.
    These dynamics yield a final state $\widetilde{Q}_{n+1}$ and a quantity $\mathcal{W}_{n+1}$ that is identified as a work. The final momentum is discarded.
    \item Draw a random variable $U_{n+1}$ with uniform law over $[0,1]$ and
    \begin{itemize}
        \item If $U_{n+1} \le \exp(-\beta \mathcal{W}_{n+1})) \displaystyle \frac{\rho(\widetilde{Z}_{n+1}, Z_n)}{\rho(Z_n,\widetilde{Z}_{n+1})}$,\\
        accept the proposal: $Q_{n+1}=\widetilde{Q}_{n+1}$.
        \item Otherwise, reject the proposal: \\$Q_{n+1} = Q_n$.
    \end{itemize}
\end{enumerate}

The three steps of the algorithm are schematically illustrated in \cref{fig:overviewAlgo}.

In principle, any probability density $\rho(z, z')$ can be chosen to generate the proposal moves in CV space in \textbf{Step~1}, since the accept/reject criterion in \textbf{Step~3} always ensures unbiasedness, as will be proven below. This choice will, however, affect the acceptance probability and thus the performance of the algorithm. In our numerical examples, we always use an independent sampler in the CV space, i.e. $\rho(z,z')$ is a function of~$z'$ only. A natural choice is (an approximation of) the marginal density~$\rho(z')$ of the target measure in CV space -- that is the image measure of the target distribution \eqref{eq:boltzmanntarget} under the CV map $\xi$. For a CV of intermediate dimension, such a proposal sampler could for example be learned by a normalizing flow.

We explain the various ingredients required for \textbf{Step~2} in the next three subsections. The idea to sample a system by following some dynamics followed by an accept/reject step that is governed by a work is closely related to the Jarzynski and the Jarzynski--Crooks equalities
\cite{jarzynski-97,Crooks99}.
\subsubsection{Schedule}
\label{subsubsec:schedule}
To specify the constrained dynamics under which the system evolves from the value of the CV at the previous iteration $Z_n$ to the newly proposed value $\widetilde{Z}_{n+1}$, choose first the number of intermediate steps $K_T$ associated with this jump. With a fixed time discretization $\Delta t$, this sets the physical transition time $T=K_T \Delta t$. In practice, we will choose $K_T$ as a function of the Euclidean distance $|\widetilde{Z}_{n+1} - Z_n|$ of the proposed jump. 

Introduce a schedule $(z^{(Z_n, \widetilde{Z}_{n+1})}(t_k))_{0\le k \le  K_{T}}$ with $t_k = k\Delta t$ whose endpoints are
\begin{align}
    z^{(Z_n, \widetilde{Z}_{n+1})}(t_0)&=Z_n\\
z^{(Z_n, \widetilde{Z}_{n+1})}(t_{K_T})&=\widetilde{Z}_{n+1}
\end{align}
as well as a velocity schedule
$(v_z^{(Z_n, \widetilde{Z}_{n+1})}(t_k))_{0\le k \le  K_{T}}$ with zero initial and final velocity, 
\begin{align}
    v^{(Z_n, \widetilde{Z}_{n+1})}_z(t_0)  = v_z^{(Z_n, \widetilde{Z}_{n+1})}(t_{K_T}) =0.
    \label{eq:zerovelocity_condition}
\end{align} 
We require that the schedules for position and velocity satisfy the following reversibility property with the switching of the endpoints: for any $(Z, \widetilde{Z})$, we need that 
\begin{align}
    z^{(Z, \widetilde{Z})}(t_k) &= z^{(\widetilde{Z}, Z)}(t_{K_T-k})\label{eq:rev_positions}\\
    v_z^{(Z, \widetilde{Z})}(t_k) &= -v_z^{(\widetilde{Z}, Z)}(t_{K_T-k}) \label{eq:rev_velocities},
\end{align}
 for all $k \in \{0, \ldots, K_T\}$. In practice, it is advisable for numerical reasons to choose the velocity schedule in accordance with the position schedule (see \cref{rem:velocity_schedule_choice} in the appendix). Note, however, that the requirement \eqref{eq:rev_velocities} precludes the simplest choice for the velocity of $v_z(t_k) = (z(t_{k+1}) - z(t_k))/\Delta t$. A simple choice that admits a continuous-in-time limit when $\Delta t$ goes to zero is instead a schedule that connects the endpoints on a straight line with a continuously differentiable schedule function $f:[0,1]\rightarrow [0,1]$:
\begin{align}
\label{eq:schedule_function_z}
    z^{(Z, \widetilde{Z})}(t_k) &= Z + (\widetilde{Z} - Z) f(k/K_T) \\
    v^{(Z, \widetilde{Z})}_z(t_k) &= (\widetilde{Z} - Z) \frac{f'(k/K_T)}{K_T \Delta t}.
    \label{eq:schedule_function_vz}
\end{align}
The conditions on the schedule translate to conditions on $f$ as $f(0) = 0$, $f(1)=1$ and $f'(0) = f'(1)=0$, which are fulfilled for example by 
\begin{align}
    f(\tau)=(1-\cos(\tau\pi))/2.
    \label{eq:schedule_cos}
\end{align}
This is one simple choice to parameterize the schedule, but many others are possible as long as the requirements stated above are fulfilled.

The choice of transition path in CV space as well as the time schedule to follow it determines how far from equilibrium the dynamics will take the system, which affects the acceptance ratio and thus the overall numerical performance (theoretically, the highest possible acceptance will be achieved by driving infinitely slowly, but at the expense of longer computational times to integrate the steered dynamics). In our examples, we stick to the simple prescription explained above, where the transition path is a straight line in CV space, and only optimize the overall speed with which that path is followed.
See \cref{rem:velocity_schedule_choice} in \cref{subsubsec:appendix_forw} for further comments on the velocity schedule. The assumptions~\eqref{eq:zerovelocity_condition} and~\eqref{eq:rev_velocities} play a crucial role in the proof of the reversibility of the Markov chain $(Q_n)_{n \ge 0}$ and we demonstrate in \cref{subsec:AppNonlinGauss} how their violation can introduce a bias. Note that for some choices of CV, the condition \eqref{eq:zerovelocity_condition} can be somewhat relaxed, see \cref{rem:constant_fixman}.
\begin{rem}
    Even though this is not explicitly indicated, the total time $T$ can also depend on $(Z,\widetilde{Z})$ in which case one requires $T^{(Z,\widetilde{Z})} = T^{(\widetilde{Z},Z)}$.
\end{rem}

\subsubsection{Initialization of the momentum}
\label{subsubsec:momentum_init}
In each iteration of the algorithm, a fresh momentum is sampled in \textbf{Step 2} to initialize the constrained dynamics. To agree with condition \eqref{eq:zerovelocity_condition}, this momentum must correspond to zero velocity in $\xi$-direction (i.e. in CV space) and match with the underlying equilibrium canonical distribution orthogonally to $\xi$. We explain here how to achieve this.
For a given position $q$ and momentum $p$, the velocity $v_\xi$ is defined as
\begin{align}
    v_\xi(q,p)= \nabla\xi(q)^\top M^{-1} p\in \mathbb{R}^\ell
\end{align}
where $\nabla \xi(q)=\left(\nabla \xi_1(q), \ldots, \nabla \xi_{\ell}(q)\right) \in \R^{d \times \ell}$.
The cotangent space of zero velocity associated with a position $q\in \Sigma(z)$ is defined as
\begin{align}\label{eq:cotangent_zero_speed}
    T_q^*\Sigma(z) = \left\{p\in \mathbb{R}^d\, \middle| \, v_{\xi}(q,p)= 0\right\},
\end{align}
and the orthogonal projection on $T_q^*\Sigma(z)$ (for the scalar product induced by $M^{-1}$) is given by:
    \begin{align}\label{eq:P_M}
        P_M(q) = \operatorname{Id}_d - \nabla\xi(q) G_M^{-1}(q) \nabla\xi(q)^\top M^{-1}.
    \end{align}
Here, $G_M$ is the Gram tensor defined as
\begin{equation}\label{eq:G_M}
G_M(q) = \nabla\xi(q)^\top M^{-1} \nabla\xi(q).
\end{equation}
Throughout this paper, we assume that $G_M$ is invertible on the submanifolds $\Sigma(z)$ (for all $z$).
It is easy to check that, for a given $q\in \Sigma(z)$, one has $P_M(q) p\in T_q^*\Sigma(z)$ for any $p$ and that $P_M(q)^2=P_M(q)$ and $M^{-1}P_M(q)=P_M(q)^\top M^{-1}$ and thus $P_M$ is indeed the orthogonal projection, as stated above. 

We therefore initialize the momentum $p^0$ by sampling an unconstrained momentum $\widetilde{p}^0$ from a centered normal distribution with covariance matrix $\frac{1}{\beta} M$ and projecting it onto the subspace of zero velocity:
\begin{align*}
    p^0 = P_M(q^0) \widetilde{p}^0.
\end{align*}

\subsubsection{Constrained dynamics and work}
\label{subsubsec:Rattle}
We now specify how the system is driven under constrained dynamics to follow the chosen CV and velocity schedules. Special care needs to be taken here regarding the geometry of the level sets $\Sigma(z)$. To account for the change in phase space volume due to the rigid constraints, we need to follow the dynamics with a modified potential energy $\widetilde{V}(q) = V(q) + V_\mathrm{fix}(q)$, with the so-called Fixman term\cite{Fixman1978}:
\begin{align}
V_\mathrm{fix}(q) = \frac{1}{2\beta}\log \left(\det G_M(q)\right),
\label{eq:deffixman}
\end{align}
(see also the reversibility proof in \cref{sec:proof_reversibility} and \cref{rem:free_energy_relations_nofixman,rem:proof_no_fixma_velocities} for the technical reason why it is necessary).
We numerically demonstrate the necessity of this modification in \cref{subsec:AppNonlinGauss}. Note that the tilde on $\widetilde{V}$ and $\widetilde{H}$ indicates here the modification by the Fixman term, while we also use it on $\widetilde{Q}_{n+1}$ and $\widetilde{Z}_{n+1}$ to designate proposal states in the MCMC chain.

Starting from $(q^0, p^0)$, a path $\left((q^k,p^k)_{0 \le k \le K_T}\right)$ is generated by discretizing in time a constrained Langevin dynamics, with a splitting scheme of the form (midpoint Euler)-(Verlet)-(midpoint Euler) (see for example Ref.~\onlinecite{lelievre_langevin_2012}). The first and last parts (the thermostat parts) insert fluctuations and the middle part realizes Hamiltonian dynamics. For the latter, we use a RATTLE discretization scheme with constraints, slightly different to the standard version since we consider constraints which evolve in time (see also~\cref{rem:velocity_schedule_choice} in the Appendix for further discussion). For the sake of readability, we drop the superscript indicating the endpoints of the schedules $z(t)$ and $v_z(t)$ in this subsection. The scheme writes as follows: For $0 \leq k \leq K_T-1$, update the positions and momenta as
\begin{align}
 p^{k+1/4} =   p^{k} &-\frac{\dt}{4} \gamma_P(q^k) M^{-1}(p^{k+1/4}+p^{k}) \notag \\
       &+ \sqrt{ \frac{\dt}{2}} \sigma_P(q^k) {\mathcal G}^k 
     \label{eq:flucdissconstjarz1}
\end{align}
\begin{align}
  \begin{cases}
    \dps  p^{k+1/2} = p^{k+1/4} - \displaystyle{\frac{\dt}{2} \nabla \widetilde{V} (q^{k})} + \nabla \xi(q^k) \lambda^{k+1/2} \\[6pt]
    \dps  q^{k+1} = q^{k} + \dt \ M^{-1} p^{k+1/2}  \\[6pt]
    \dps   \xi(q^{k+1})  = z(t_{k+1}) \quad\quad (C_q) \\[6pt]
    \dps  p^{k+3/4} = p^{k+1/2} - \displaystyle{\frac{\dt}{2} \nabla \widetilde{V} (q^{k+1})} +\nabla \xi(q^{k+1}) \lambda^{k+3/4}\\[6pt]
    \dps   \nabla \xi (q^{k+1})^{T} M^{-1} p^{k+3/4} = v_z(t_{k+1})\quad\quad (C_p)
\end{cases}\label{eq:Verletconstswitched}
\end{align}
\begin{align}
 p^{k+1} = p^{k+3/4} &-\frac{\dt}{4} \gamma_P(q^{k+1}) M^{-1}(p^{k+3/4}+p^{k+1}) \notag \\
 &+ \sqrt{\frac{\dt}{2}} \sigma_P(q^{k+1}) {\mathcal G}^{k+1/2}
      \label{eq:flucdissconstjarz2}
\end{align}
where $({\mathcal G}^k)$ and $({\mathcal G}^{k+1/2})$ are independent sequences of i.i.d. centered Gaussian random vectors with covariance matrix~$\mathrm{Id}_{d}$. Note that the constraint in~$(C_q)$ is time-dependent, and thus the constraint in~$(C_p)$ is non zero, in contrast to the standard RATTLE scheme. The Lagrange multiplier $\lambda^{k+1/2}$ (resp. $\lambda^{k+3/4}$) is still chosen so that the constraint $(C_q)$ (resp. $(C_p)$) is satisfied. For $\lambda^{k+3/4}$, this leads to a linear equation with an analytical solution (as for the usual RATTLE scheme), whereas the solution for $\lambda^{k+1/2}$ will in general be obtained numerically using Newton's methods. For further details on this point, in particular regarding uniqueness of the solution, see \cref{rem:existence_lagrange_multip} in the Appendix. Note that there is an equivalent formulation of \eqref{eq:flucdissconstjarz1} and \eqref{eq:flucdissconstjarz2} that is more suitable for numerical implementation, which we provide in \cref{app:numericalformulation} for completeness. Besides, let us emphasize that the matrices $\gamma_P(q^k) = P_M(q) \gamma P_M(q)^\top$ and $\sigma_P(q^k)=P_M(q)\sigma$ satisfy the fluctuation-dissipation identity
\begin{equation}\label{eq:fluct_dissip}
    \forall q\in \mathbb{R}^d,\qquad \sigma_P(q) \sigma_P(q)^\top=\frac{2}{\beta} \gamma_P(q),
\end{equation}
ensured by defining the scalar diffusion parameter $\sigma$ in terms of the scalar damping parameter $\gamma > 0$ as
\begin{equation}\label{eq:scalar_fluct_dissip}
\sigma^2=\frac{2 \gamma}{\beta}.
\end{equation}
We stick here for simplicity to scalar damping and diffusion parameters, but generalization to matrices is straightforward.

Assuming that the position and velocity schedules are compatible in the continuous-time limit $\Delta t \to 0$, i.e. $v_z(t) = \dot{z}(t)$, \cref{eq:flucdissconstjarz1,eq:Verletconstswitched,eq:flucdissconstjarz2,} are discretizations of Langevin dynamics with constraints  $(\xi(q(t)),v_\xi(q(t),p(t)))=(z(t),\dot{z}(t))$, see Ref.~\onlinecite{lelievre_langevin_2012}.

These dynamics yield a proposal move in the full configuration space $\widetilde{Q}_{n+1}=q^{K_T}$ as well as a work $\mathcal{W}_{n+1} := \mathcal{W}^{K_T}$, which we use to compute the acceptance probability in \textbf{Step 3}. This work is defined as
    \begin{equation}
        \left\{
        \begin{aligned}
                \mathcal{W}^0 &= 0\\
                \mathcal{W}^{k+1} &= \mathcal{W}^k + \widetilde{H}(q^{k+1},p^{k+3/4}) - \widetilde{H}(q^k,p^{k+1/4}),
        \end{aligned}
        \right.
        \label{eq:work_definition}
        \end{equation}
for $k=0\ldots K_T-1$, where $\widetilde{H}$ denotes the Hamiltonian that is modified by the Fixman term \eqref{eq:deffixman}, namely ${\widetilde{H}(q,p) = \widetilde{V}(q) + \frac{1}{2}p^\top M^{-1} p}$.

For $\gamma=0$, the dynamics reduce to the deterministic part \eqref{eq:Verletconstswitched} (driven Hamiltonian dynamics) since \eqref{eq:flucdissconstjarz1} and \eqref{eq:flucdissconstjarz2} leave the system unchanged. The only remaining source of randomness is thus the initialization of the momentum $p^0$ in \textbf{Step 2}. Furthermore, since in this case $p^{k+1/4}=p^k$ and $p^{k+3/4}=p^{k+1}$, the work simplifies to a telescopic sum and just depends on the endpoints of the path: \[\mathcal{W}^{K_T} = \widetilde{H}(q^{K_T},p^{K_T}) - \widetilde{H}(q^0,p^{0}).\]

The idea of resampling the momentum at each iteration of the algorithm to build an MCMC sampler in position space is akin to Hamiltonian Monte-Carlo (HMC)~\cite{DUANE1987216} and for $\gamma=0$, this analogy is particularly apparent. Unlike HMC, however, where the total energy is conserved up to a time discretization error, our algorithm does not conserve the energy due to the time-evolving constraints in CV space. Rather, the accumulated work is used in \textbf{Step 3} of the algorithm to guarantee the reversibility of the Markov kernel.

\begin{rem}
\label{rem:constant_fixman}
When the Gram tensor $G_M(q)$ does not depend on $q$ (an example is given in \cref{subsec:dimer}), the Fixman term just adds an overall shift to the energy $V$ and need not be considered. In this case, the requirement \eqref{eq:zerovelocity_condition} on the initial and final velocity can also be relaxed to $v_z(t_0) = v_z(t_{K_T})$ and can be non-zero, assuming that the initial momentum $p^0$ is still sampled from the appropriate subspace (see  \cref{rem:free_energy_relations_nofixman,rem:proof_no_fixma_velocities}). This is true in particular for the case of a linear CV, which, after a change of basis, can always be seen as representing a subset of $\ell$ degrees of freedom. This is the standard setting treated in previous works\cite{athenesComputationChemicalPotential2002,nilmeier2011nonequilibrium,chen_enhanced_2015,chen2015generalized,neal2005taking}, where some degrees of freedom are driven through a non-equilibrium protocol. We spell out the dynamics for the simplified setting of the linear CV in \cref{sec:linearCV}, since it allows for an intuitive understanding as an alchemical transition with the CV as the alchemical parameter.
\end{rem}
The problem of constructing a steered path can be understood as transporting samples from the constrained target measure at the initial CV value $Z_n$ to the final one $\widetilde{Z}_{n+1}$. Our method is based on a stochastic differential equation (with the auxiliary variable of the momentum), but other dynamics are possible, notably deterministic transformations\cite{chen_enhanced_2015} or a combination of stochastic and deterministic steps, where the underlying SDE is modified by a deterministic `escorting' term\cite{vaikuntanathan_escorted_2011}. The latter can also be understood as an instance of a stochastic normalizing flow\cite{caselle_stochastic_2022}. These approaches are not easy to adapt to our context of a nonlinear CV, however, since they require to transport samples supported on the submanifolds $\Sigma_z$.
\subsection{Reversibility of the algorithm}
\label{subsec:reversibility}
Our main theoretical result is the following theorem:
\begin{theorem}
\label{theo:reversbility}
The MCMC algorithm presented in \cref{subsec:algorithm} is reversible with respect to the target measure $\nu(\mathrm{d}q)$.
\end{theorem}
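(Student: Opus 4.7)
The plan is to lift the analysis to phase space: work with the extended Boltzmann--Gibbs measure $\mu$ together with the momentum-flip involution $S:(q,p)\mapsto(q,-p)$, and establish reversibility of the extended chain (including the initial momentum $p^0$ drawn at each iteration) with respect to $\mu$. Since the $q$-marginal of $\mu$ is $\nu$, reversibility of $(Q_n)$ with respect to $\nu$ follows by integrating out the momentum. The key point is that all three randomization stages in \textbf{Step 2}---the projection $P_M(q^0)\widetilde p^0$, the driven constrained RATTLE dynamics, and the Metropolis accept/reject step---interact correctly with $S$ and with the conditional law of $\mu$ on the constraint manifold.

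The heart of the proof is a pathwise Jarzynski--Crooks identity for the driven, constrained RATTLE scheme. Let $(q^k,p^k)_{0\le k\le K_T}$ be a forward trajectory produced by the schedules $(z^{(Z_n,\widetilde Z_{n+1})}, v_z^{(Z_n,\widetilde Z_{n+1})})$ and call the reverse trajectory the one generated by $(z^{(\widetilde Z_{n+1},Z_n)}, v_z^{(\widetilde Z_{n+1},Z_n)})$ starting from $(q^{K_T},-p^{K_T})$. I would show that the Radon--Nikodym derivative between the forward and reverse path laws, each weighted by the appropriate conditional initial distribution derived from $\mu$, is exactly $\mathrm{e}^{-\beta\mathcal{W}}$. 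Three ingredients combine: (i) the schedule reversibility \eqref{eq:rev_positions}--\eqref{eq:rev_velocities} identifies the reverse driving profile with the endpoint-swapped one; (ii) the fluctuation--dissipation identity \eqref{eq:fluct_dissip} makes each Ornstein--Uhlenbeck half-step \eqref{eq:flucdissconstjarz1}, \eqref{eq:flucdissconstjarz2} reversible with respect to the constrained kinetic distribution via the standard midpoint argument; (iii) the constrained Verlet block \eqref{eq:Verletconstswitched} is time-reversal symmetric under momentum flip, the only path-dependent contribution to the likelihood ratio being the increments of $\widetilde H$ that define $\mathcal{W}$ in \eqref{eq:work_definition}.

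Combining this Crooks identity with the Metropolis ratio $\exp(-\beta\mathcal{W}_{n+1})\,\rho(\widetilde Z_{n+1},Z_n)/\rho(Z_n,\widetilde Z_{n+1})$ yields detailed balance on the extended phase space: the acceptance ratio is precisely what turns the Crooks ratio together with the CV proposal ratio into $1$. Marginalizing the initial and final momenta against the constrained Gaussians $\mu(\cdot\mid \xi(q)=z,\,v_\xi(q,p)=0)$ then leaves $\nu(\mathrm{d}q)$ on both sides of the detailed balance equation. This is where the Fixman term \eqref{eq:deffixman} plays a crucial role: the factor $\sqrt{\det G_M(q)}$ produced by the co-area disintegration of $\mu$ over the level sets $\Sigma(z)$ and the analogous factor produced by the Gaussian restriction onto $T_q^*\Sigma(z)$ are exactly what the correction $\widetilde V = V + V_{\mathrm{fix}}$ cancels, so that all conditional measures are proportional to $\mathrm{e}^{-\beta\widetilde H}$ and can be compared against $\mathcal{W}$ directly.

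The main obstacle I anticipate is a careful bookkeeping of measures on the submanifolds $\Sigma(z)$ and their cotangent bundles. Concretely, one must (a) disintegrate $\mu$ over the level sets so that the Gram determinants produced are correctly tracked and cancelled by $V_{\mathrm{fix}}$---this is where the standing assumption that $G_M$ be invertible on each $\Sigma(z)$ enters; (b) identify the Jacobians of the constraint-solving maps for $\lambda^{k+1/2},\lambda^{k+3/4}$ in \eqref{eq:Verletconstswitched} and check that they agree between the forward and reverse paths, so that they drop out of the Radon--Nikodym derivative; and (c) verify that the endpoint conditions \eqref{eq:zerovelocity_condition} together with the velocity reversibility \eqref{eq:rev_velocities} place $p^0$ and $-p^{K_T}$ in the correct cotangent spaces, so that the initial-momentum projectors $P_M$ intertwine correctly with the time-reversal $S$---the necessity of these assumptions is precisely what \cref{subsec:AppNonlinGauss} illustrates. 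This extends the overdamped, linear-CV analysis of Ref.~\onlinecite{schonleOptimizingMarkovChain2023} and the standard constrained-Langevin computations of Ref.~\onlinecite{lelievre_langevin_2012}.
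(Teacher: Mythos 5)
Your blueprint matches the paper's argument essentially step for step: both hinge on a Jarzynski--Crooks identity for the driven constrained RATTLE dynamics, exploit the schedule reversibility \eqref{eq:rev_positions}--\eqref{eq:rev_velocities} together with the momentum flip to identify the reverse process with the endpoint-swapped forward process (this is \cref{lem:forw_backw_same_law}), and track the co-area factors $(\det G_M)^{\pm 1/2}$ through \cref{lem:free_energy_relations} so that the Fixman correction makes the acceptance rule exact. The one place where your sketch is vaguer than the paper is your point (b): rather than checking that the Jacobians of the constraint-solving maps ``agree between the forward and reverse paths,'' the paper proves (\cref{lem:flow_symplectic}) that the constrained Verlet step is a \emph{symplectic} map on $\Sigma_{\xi,v_\xi}$, so its phase-space Jacobian is identically one and nothing needs to cancel; the paper also makes explicit the standing small-$\Delta t$ assumption $\Phi^{\mathrm{b},K_T-k}\circ\Phi^{k-1}=\mathrm{Id}$ that your step (iii) implicitly relies on, both of which you would need to supply to make the Radon--Nikodym computation rigorous.
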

We prove this statement in \cref{sec:appendix_reversibility}. The basic ingredient of the proof is to introduce a backward version of the dynamics (associated to the forward dynamics) and use a version of the Jarzynski--Crooks equality\cite{Crooks99}.

As a consequence of \cref{theo:reversbility}, the MCMC algorithm admits $\nu$ as an invariant measure and, up to checking irreducibility which has to be done on a case-by-case basis, empirical averages over trajectories $(Q_n)_{n \ge 0}$ converge to averages with respect to $\nu$.
\begin{rem}
    In \cref{subsec:AppNonlinGauss}, we numerically demonstrate how the omission of the Fixman term or an inappropriate steering schedule can lead to a bias in the simulations, underlining that these two ingredients are indeed necessary for an unbiased algorithm.
\end{rem}
\subsection{A normalized version of the algorithm}
\label{subsec:Parameterization}
The overall algorithm depends on many parameters: $\Delta t$, $\gamma$, $\sigma$, $M$, the choice of the schedule $z(t)$ and $v_z(t)$, and the transition time $T$. In order to reduce the number of parameters, we will use the following normalized version in the numerical examples: Assuming that the mass matrix $M=M\,\mathrm{Id}$ is isotropic, we show in \cref{app:parameterization_scalar} that the overall dynamics only depends on the two parameters \[
\alpha_1 = \Delta t \gamma/(4M), \quad\alpha_2 = \Delta t^2/(\beta M).\]
as well as the schedule $z(t_k)$ and the  normalized velocities $\widetilde{v}_z(t_k) = \Delta t v_z(t_k)$. To fix these schedules, one can use the version given by \eqref{eq:schedule_function_z}-\eqref{eq:schedule_function_vz} with an appropriate function $f$ and choose the number of intermediate steps for a proposed jump as $K_T= \mathrm{ceil}\left(|\widetilde{Z}_{n+1} -  Z_n|/\widetilde{v}\right)$, with an effective normalized velocity $\widetilde{v}$. With this, the algorithm is then fully governed by the three parameters $\alpha_1, \alpha_2$, and $\widetilde{v}$.

Regarding the physical interpretation of $\alpha_1$ and $\alpha_2$, note first that the damping term in \eqref{eq:flucdissconstjarz1} and \eqref{eq:flucdissconstjarz2} is associated to the decorrelation of momenta over a time scale of $M/\gamma$. The parameter $\alpha_1$ therefore measures the inverse of this friction decorrelation time, rescaled by the time discretization. Furthermore, the equilibrium distribution of the velocities is a centered normal distribution with covariance $(\beta M)^{-1}$, which means that $\alpha_2$ is the squared displacement generated by a typical thermal velocity within one integration step.

For the parameter choice $\alpha_1=1$, the steered dynamics \eqref{eq:flucdissconstjarz1}-\eqref{eq:flucdissconstjarz2} reduce to overdamped Langevin dynamics (see \cref{sec:linearCV}). If $\xi$ is linear in addition, so that after a suitable change of basis the CV can be identified with the first $\ell$ degrees of freedom, the overall algorithm is equivalent to the `asymmetric' algorithm presented in our earlier work \cite{SCHONLE2025113806} (up to a slight change of the definition of the steered path: here we first move the CV and then the remaining degrees of freedom). Therein, we also provided the continuous-time limit (which is the one of infinite damping or zero mass) for this version of the algorithm together with the corresponding stochastic differential equation (SDE). For the choice of $\alpha_1=0$, the steered dynamics correspond to the Verlet scheme (Leapfrog dynamics) in the orthogonal degrees of freedom and are therefore a discretization of Hamiltonian dynamics with a time-dependent Hamiltonian. Note that in this case, the continuous-time limit of the dynamics $\Delta t \to 0$ simply corresponds to taking $\alpha_2\to 0$ and the acceptance of the move in \textbf{Step 3} is governed only by the (normalized) physical transition time $T/\sqrt{\beta M} = K_T\Delta t/\sqrt{\beta M}=\mathrm{ceil}\left(|\widetilde{Z}_{n+1} - Z_n|/\widetilde{v}\right)\sqrt{\alpha_2}$ (see \cref{fig:dumbell_determ_transition_time} in the appendix for a numerical illustration).

\section{Numerical examples}
\label{sec:numerics}
In this section, we apply the proposed CV-guided sampler to examples of increasing complexity, demonstrating in particular the advantage of deterministic underdamped steering dynamics compared to the previously considered overdamped version. We consider four model systems of increasing complexity: a theoretical model-system dubbed `Gaussian tunnel' (\cref{subsec:gauss_tunnel}), the one-dimensional $\phi^4$ model from statistical physics (\cref{subsec:phi4}), a dimer in a solvent (\cref{subsec:dimer}), and finally a 9-bead polymer in a solvent (\cref{subsec:polymer}). 
For each example, we performed a grid search within our chosen parameterization to optimize the performance of the algorithm. For the more realistic polymer example, we consider a 27-dimensional CV for which a generative model is necessary to build a proposal sampler adapted to the high-dimensional CV space. In our experiments, we demonstrate the efficacy of our approach using a normalizing flow to propose moves in CV space and converging the Monte Carlo simulation on the fully solvated system.

\subsection{Gaussian tunnel}
\label{subsec:gauss_tunnel}
We first consider a simple model with a linear CV as already introduced in previous work\cite{SCHONLE2025113806}. It is a toy model in $d=20$ dimensions with a designated one-dimensional variable $z$ serving as the CV and $d-1$ additional coordinates $(x^\perp_i)_{1\leq i\leq d-1}$. The distribution of~$z$ is given by a Gaussian mixture and the conditional distribution of the remaining coordinates is a Gaussian with a mean depending on $z$. More precisely,
\begin{align}
    \nu(\mathrm{d}z,\mathrm{d}x^\perp) &= \nu_{\rm CV}(\mathrm{d}z) \nu_\perp(\mathrm{d}x^\perp|z), \notag\\
    \nu_{\rm CV}(\mathrm{d}z) & = \frac{1}{\sqrt{2\pi}} \left(w \, \mathrm{e}^{-z^2/2} + (1-w)\, \mathrm{e}^{-(z-b)^2/2}\right)\mathrm{d}z,
    \label{eq:z_marginal}\\
    \nu_\perp(\mathrm{d}x^\perp|z) & = (2\pi )^{-(d-1)/2} \mathrm{det}(\Sigma)^{-1/2} \\
    &\times e^{\left(-{\frac {1}{2}}\left({x^\perp }-\mu(z)\right)^{\top}\Sigma ^{-1}\left({x^\perp }-\mu(z)\right)\right)} \, \mathrm{d}x^\perp.\notag
\end{align}
Here, $\mu(z)=\frac{b}{2} \cos(z \pi/b) (1,1,\dots,1)^\top$ is the conditional mean and $\Sigma$ a diagonal covariance matrix with ${{\Sigma }}_{ii}=\sigma_i^2$. The weight of the first mode is~$w=0.3$, while the distance between the modes is~$b=10$. The values of $\sigma_i$ were chosen from an evenly spaced grid in the interval $[0.5, 5]$.

To apply the algorithm from \cref{subsec:algorithm}, we use a Gaussian mixture model as the proposal sampler in CV space in \textbf{Step 1} which is the same as the target $\nu_\mathrm{CV}$ apart from a different mode weight $w=0.5$.
We identify $V(q) = V(z,x) = -\log \nu(z,x)$ with $\beta=1$, where, with a slight abuse of notation, we still denote by $\nu$ the density of the measure $\nu(\mathrm{d}q)$. We use the normalized version of the algorithm and schedule that only depends on $\alpha_1$, $\alpha_2$ and $\widetilde{v}$, detailed in \cref{subsec:Parameterization}. Since no Fixman term needs to be considered here, we simply use a linear schedule function $f(\tau)=\tau$.

To evaluate the performance, we consider (i) the acceptance rate for a fixed jump across the metastable basins from $z=0$ to $z=b$ normalized by the number of intermediate steps $K_T$ (as a measure of computational cost) and (ii) the inverse mode jump cost, which is the number of accumulated steps (calls to the force) that is needed on average to see a mode switch  between $z< b/2$ and $z>b/2$ when running the full algorithm.

Running a rough optimization through a grid search reveals that the optimum performance is obtained for $\alpha_1=0$, which corresponds to a deterministic steered dynamics (see \cref{fig:tunnel_optim} of \cref{app:subsec:gaussiantunnel}). We compare the deterministic ($\alpha_1=0$) and the overdamped ($\alpha_1=1$) steered dynamics in \cref{fig:Gaussians} using the two performance measures, which show consistent results. The deterministic dynamics offer a large performance increase over the overdamped dynamics  with a difference of around two orders of magnitude. In both cases, we observed that for too large values of $\alpha_2$ (a too coarse time discretization), the dynamics become unstable and the trajectories diverge, leading to zero acceptance of the overall move. The optimal value for $\alpha_2$ is then just below this threshold. Typical driven transition paths are shown for the optimum of the overdamped and the deterministic case in \cref{fig:GaussianTunnelTrajectories}. To verify the unbiasedness of the algorithm at the optimum, we show the marginals of $z$ and $x_0^\perp$ in \cref{fig:GaussianTunnelMarginals}. They indeed agree with the target measure and show that the algorithm corrects for the biased proposal distribution in CV space. 

Since the Fixman term and condition \cref{eq:zerovelocity_condition} on the schedule need not be considered in this simple case of a linear CV, we also show results for a modified version with a non-linear CV in \cref{subsec:AppNonlinGauss}. In this case, we indeed observe that a bias emerges when these ingredients are disregarded.
\begin{figure}
    \centering
    \includegraphics{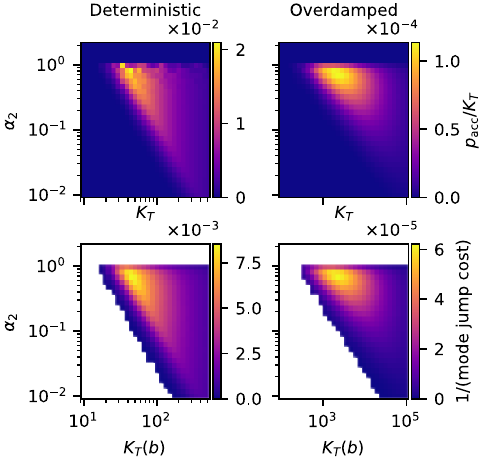}
    \caption{Performance of the algorithm for the Gaussian tunnel example in the special cases of deterministic ($\alpha_1 = 0$) and overdamped Langevin dynamics ($\alpha_1=1$). The top row shows the acceptance rate normalized by $K_T$ for a fixed jump from $z=0$ to $\widetilde{z}=b$ for different values of $K_T$, averaged over 10,000 random initializations of $x^\perp\sim\nu_\perp(x^\perp | z)$. The second row shows the inverse mode jump cost, averaged over 20 chains and 10,000 iterations, as a function of the number of steps $K_T(b)$ for a jump of distance $b$, which is related to the inverse of the dimensionless velocity $\widetilde{v}$ via $K_T(b) = \mathrm{ceil}(b/\widetilde{v})$. White space indicates that at least one chain never switched mode.}
    \label{fig:Gaussians}
\end{figure}

\begin{figure}
    \centering
    \includegraphics{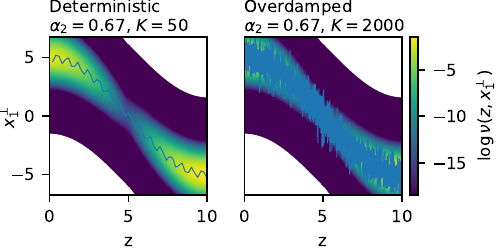}
    \caption{Typical driven transition path from one mode center to another, $z=0$ to $z=b=10$, for overdamped ($\alpha_1=1)$ and deterministic ($\alpha_1=0$) dynamics for the optimal $\alpha_2$ and $K_T=b/\widetilde{v}$.}
    \label{fig:GaussianTunnelTrajectories}
\end{figure}
\begin{figure}
    \centering
    \includegraphics{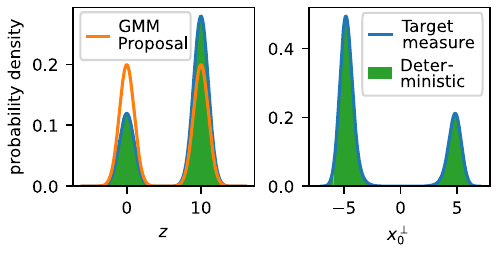}
    \caption{Gaussian Tunnel marginals for the optimum performance values $\alpha_1=0$, $\alpha_2 = 0.67$ and $\widetilde{v}=0.2$, corresponding to $K_T(b)=50$. Results are obtained by running 8 independent chains for 200,000 iterations.}
    \label{fig:GaussianTunnelMarginals}
\end{figure}
\subsection{\texorpdfstring{$\phi^4$}{Phi4} Model}
\label{subsec:phi4}
As a second example we consider a field model from statistical physics, the one-dimensional $\phi^4$ model (also considered previously\cite{SCHONLE2025113806}). It consists of $N$ degrees of freedom $\{\phi_i\}_{i=1}^N$ on a one-dimensional lattice and can be seen as a continuous version of the Ising model for ferromagnetism in solids. The target distribution is the Boltzmann--Gibbs probability distribution $\nu({\rm d}\phi)$ with the energy (using the simplified version of the model considered in Ref.~\onlinecite{gabrieAdaptiveMonteCarlo2022})
\begin{align}
    \beta V(\phi) = \frac{a\beta N}{2} \sum_{i=1}^{N+1}(\phi_i - \phi_{i-1})^2 + \frac{\beta}{4 a N}\sum_{i=1}^N (1-\phi_i^2)^2,
    \label{eq:phi4}
\end{align}
with Dirichlet boundary conditions $\phi_0=\phi_{N+1} = 0$. For sufficiently low temperatures~$1/\beta$, this system has two well-separated modes characterized by positive and negative  magnetization~$\overline{\phi}=\sum_{i=1}^N \phi_i / N$ concentrated at $\pm \overline{\phi}^*$ for some $\overline{\phi}^*$ which depends on $\beta$. We place ourselves in this bimodal phase by setting~$\beta=20$ and~$a=0.1$. In many practical settings, one will be interested in systems with an external field $h$, which corresponds to adding a term $h\sum_{i=1}^N \phi_i$ to the potential $V(\phi)$. For the purpose of testing the algorithm, however, the case $h=0$ is most illustrative since the two modes will have exactly the same weight due to the symmetry $V(\phi)=V(-\phi)$, and one can verify that the algorithm produces no bias with respect to the mode weights.

We use the magnetization $\overline{\phi}$ as the collective variable, which is a natural order parameter of the system. 
We run the algorithm on the system in a wavelet basis (see \cref{sec:wavelet_details} for details) since this order parameter then naturally emerges as the first component of the system. 
To fix the proposal sampler in CV space, we first ran a short simulation using the Metropolis Adjusted Langevin Algorithm (MALA)~\cite{RDF78,roberts1996}, which, due to the local proposal updates of MALA, always stays within one mode. From this data, the mean and standard deviation of the observed mode were then estimated as $\overline{\phi}^* = 0.79$ and $\sigma=0.06$. Consequently, we used as the proposal sampler a Gaussian mixture model with two modes centered at $\pm \overline{\phi}^*$ and of width $\sigma$.

Since we are again using a linear CV here, we use the same linear positions schedule $z(t_k)$ and constant velocity schedule $v_z(t_k)$ which we also used for the Gaussian tunnel. We compare the performance of the sampling algorithm over a grid of parameters $\alpha_1$, $\alpha_2$ and $\widetilde{v}$, using the mode jump cost as a criterion (the average number of calls to the force between transitions between $\overline{\phi}>0$ and $\overline{\phi}<0$). Optimal performance is again observed for the deterministic dynamics at $\alpha_2=0$ (see \cref{fig:Phi4optim} in \cref{subsec:AppPhi4}). Notably, as can be seen in \cref{fig:phi_comparison}, there is a performance increase by around two orders of magnitude passing from the overdamped to the deterministic dynamics. This is consistent with the experiments for the Gaussian tunnel. Away from the optimum, we see a white diagonal where no transitions were observed, corresponding to a particularly `bad' physical transition time $T/\sqrt{\beta M} = K_T\sqrt{\alpha_2}$. This might be a resonance effect due to the deterministic dynamics, but it is in any case far away from the parameter range of optimum performance. 

For the optimal parameters, we also run a modified experiment where we deliberately bias the CV proposal sampler by putting a larger weight on the mode corresponding to negative magnetization. As we show in \cref{fig:phi_marginals}, this bias is accurately corrected and the obtained samples align perfectly with the target density.

Driving the system from one mode to another corresponds to creating a domain of opposite orientation from either end of the lattice and gradually increasing its size. We indeed observe in \cref{fig:phi_transitionpaths} that `good' transitions with a high probability of acceptance follow one of these two transition paths. The existence of multiple transition paths does not pose a difficulty per se for the algorithm presented here: the acceptance of a proposal move is only determined by the work accumulated along the way and thus the intermediate bimodality is not a problem for accurate sampling of the two modes and their relative weights.
\begin{figure}
    \centering
    \includegraphics{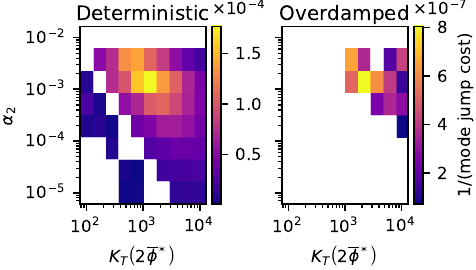}
    \caption{Performance of the algorithm applied to the $\phi^4$ model, estimated from 10 chains over 20,000 iterations. The x-axis represents the number of steps for a steered trajectory over the distance $2\overline{\phi}^*$ between the two modes. White space indicates that at least one chain never switched mode.}
    \label{fig:phi_comparison}
\end{figure}    
\begin{figure}
    \centering
    \includegraphics{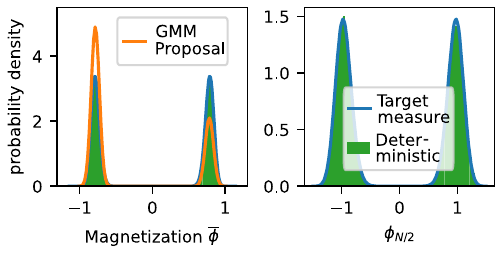}
    \caption{Marginal distribution of the $\phi^4$ model obtained from running 8 chains for $10^4$ steps with $\alpha_1=0$, $\alpha_2=0.0014$ and $\widetilde{v}=1.2\times 10^{-3}$ (the optimal parameters obtained from optimization in \cref{fig:phi_comparison}). In this case, a biased GMM proposal with weights $0.3$ and $0.7$ was used.}
    \label{fig:phi_marginals}
\end{figure}
\begin{figure}
    \centering
    \includegraphics{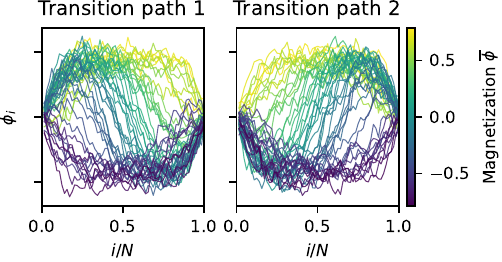}
    \caption{Two possible transition paths in $\phi^4$ model from positive to negative values of $\overline \phi$ with the same parameters as in \cref{fig:phi_marginals}}
    \label{fig:phi_transitionpaths}
\end{figure}
\subsection{Dimer in a solvent}
\label{subsec:dimer}
We now apply the algorithm on a model with a non-linear collective variable\cite{straub_molecular_1988,dellago_calculation_1999,lelievre-rousset-stoltz-book-10}. It consists of $N$ particles of equal masses $M$ in a two-dimensional box of side length $L$ with periodic boundary conditions. Two of the particles form a dimer; the remaining particles are designated as solvent particles. The solvent particles interact between them and with the dimer particles via the Weeks--Chandler--Andersen (WCA) pair potential, which is a truncated Lennard-Jones potential:
\begin{align*}
    V_\mathrm{WCA}(r) = \begin{cases}
        4\epsilon \left[\left(\frac{r^*}{r}\right)^{12} - \left(\frac{r^*}{r}\right)^6\right]+\epsilon\quad &\text{if } r\leq r_0,\\
        0 \quad &\text{if } r> r_0.
    \end{cases}
\end{align*}
Here, $r$ is the distance between two particles, $r_0=2^{1/6}r^*$ and $\epsilon$ and $r^*$ are positive constants. 
The two particles of the dimer interact via a double-well potential
\begin{align}
    V_\mathrm{D}(r) = h \left[1-\frac{(r-r_0-w)^2}{w^2}\right]^2,
\end{align}
with the two positive parameters $h$ and $w$. Denoting the particle positions by $\{q_i\}_{i=1}^N$ with each $q_i\in\mathbb{R}^2$, the total potential energy is thus given by
\begin{align}
    V(q) &= V_\mathrm{D}(|q_1-q_2|) + \sum_{3\leq i<j\leq N}V_\mathrm{WCA}(|q_i-q_j|) \notag \\
    &\quad + \sum_{i=1,2}\sum_{3\leq j \leq N} V_\mathrm{WCA}(|q_i-q_j|).
\end{align}
The minima of $V_\mathrm{D}$ are at $r=r_0$ and $r=r_0+2w$, giving rise to a compact state and stretched state with an energy barrier of height $h$. Two configurations for the compact and the stretched state are shown in \cref{fig:dimer_configs_example}.
\begin{figure}
    \centering
    \includegraphics{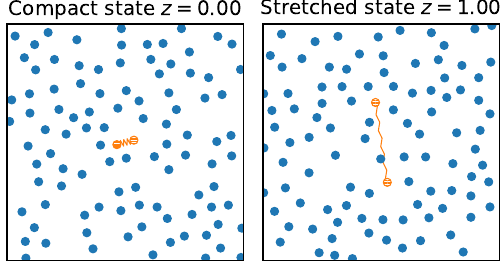}
    \caption{Typical configurations for the dimer in the compact and the stretched state.}
    \label{fig:dimer_configs_example}
\end{figure}

As a reaction coordinate, we use the normalized bond length
\begin{align}
    \xi(q) = \frac{|q_1-q_2| - r_0}{2w},
    \label{eq:dimer_reaction_coordinate}
\end{align}
so that the compact and the stretched states correspond to $\xi(q)=0$ and $\xi(q)=1$, respectively. As explained in \cref{subsec:dimer_fixman}, the Gram tensor is given by $G_M(q) = (2w^2 M)^{-1}$. Since it does not depend on $q$, the Fixman term \[
V_\mathrm{fix}(q) = - \frac{1}{2\beta} \log (2w^2 M)
\]
is therefore a constant that does not change the dynamics and need not be considered. The Lagrange multipliers for the constrained dynamics can be obtained analytically for this system and CV (see \cref{subsec:lagrane_multi_dimer}). We make use of this since it is numerically beneficial, but in principle, one could equally use numerical solvers as explained in \cref{subsubsec:Rattle}. 

We ran numerical experiments for parameters $h=w=2$, $r^*=\epsilon=1$ and $M=1$ with $N=100$ particles and a box of length $L=15$, in line with previous works\cite{lelievre-rousset-stoltz-book-10,lelievre_langevin_2012}. For the CV sampler we use a Gaussian mixture model with modes of equal weight at $z=0$ and $z=1$ and width $\sigma_{\mathrm{prop}} = 0.2$. We reject any unphysical proposals with $z<z_\mathrm{min} = -\frac{r_0}{2w}$ and $z>z_\mathrm{max} = (L/\sqrt{2} - r_0)/(2w)$ since the distance $|q_1-q_2|$ is always positive and the periodic boundary conditions (torus coordinates) impose a maximum possible distance. 

We compare the performance of the algorithm for different parameter values using a mode jump cost that counts the number of calls to the force required to go from a state with $z<0.1$ to a state with $z>0.9$ or vice versa, in line with the `mean residence duration' used in Ref.~\onlinecite{lelievre-rousset-stoltz-19}.

Once again, we observe a large advantage in using the deterministic algorithm with $\alpha_1=0$, whose performance for different values of $\alpha_2$ and $\widetilde{v}$ are shown in \cref{fig:dimer_optimalpha0}, while we see almost no mode switches for the overdamped algorithm within the same computational budget (see \cref{subsec:dimer_appendix}). 

In \cref{fig:dimer_free_energy}, we show the free energy profile obtained by running the algorithm for the optimized parameters until convergence together with the free energy in the absence of solvent particles (see \cref{appendix:dimer_free} for its derivation). As one can see, the presence of the solvent particles increases the likelihood of the compact state (lowering its free energy in comparison with the stretched state). The result agrees almost perfectly with a result obtained from thermodynamic integration (TI) (see Ref.~\onlinecite{lelievre_langevin_2012} for a theoretical introduction). In thermodynamic integration, the system is sampled under the constraint $\xi(q)=z_i$ for a sequential series of CV-values $z_i$ on a grid to estimate the mean force. This demonstrates that the inaccuracy of the proposal potential has been corrected by the algorithm. 

Note that this model was also considered in the original NCMC paper\cite{nilmeier2011nonequilibrium}. Contrary to our approach, however, they chose the full dimer configuration as the CV, $\xi(q)=(q_1,q_2)^\top$, and used a proposal kernel that samples a dimer configuration $\widetilde{Z}_{n+1}$ conditioned on the current configuration $Z_n$ with a new distance $|q_1-q_2|$ but the same overall rotation angle.
\begin{figure}
    \centering
    \includegraphics{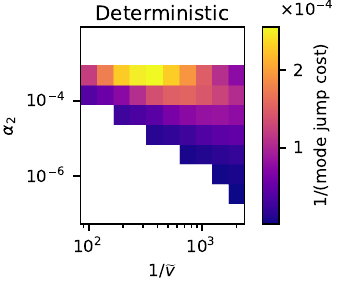}
    \caption{Performance comparison for the dimer model introduced in \cref{subsec:dimer} for the deterministic case $\alpha_1=0$, each experiment with 20 chains being run for 5,000 iterations. The optimal performance is observed at $\alpha_2 = 4.52 \times 10^{-4}$ and $1/\widetilde{v}=378$. White areas indicate that at least one chain never switched mode.}
    \label{fig:dimer_optimalpha0}
\end{figure}

\begin{figure}
    \centering
    \includegraphics{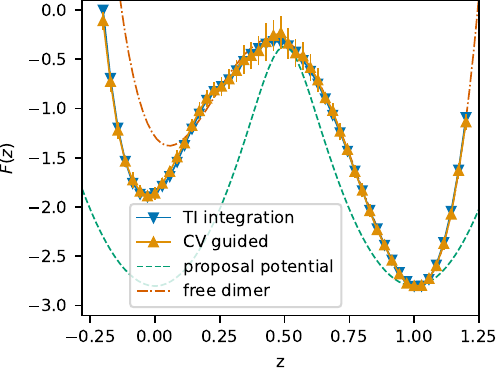}
    \caption{Free energy profile for the dimer model obtained from running the CV guided algorithm with the optimal parameters determined in \cref{fig:dimer_optimalpha0} with 20 chains and 50,000 iterations, shown alongside results from thermodynamic integration on a grid of 50 values and $10^6$ steps each. For comparison, we also show the free energy profile of the dimer without solvent particles, see \cref{appendix:dimer_free}, and the energy of the proposal distribution. An energy shift is applied to all of these curves such that they coincide at $z=1$. Error bars are estimated from the standard deviation over parallel MCMC chains of counts of the histogram bins.}
    \label{fig:dimer_free_energy}
\end{figure}

\subsection{Polymer in a solvent}
\label{subsec:polymer}
As a final example, we consider a toy molecular system that consists of a 9-bead polymer embedded in a three-dimensional box with 200 solvent particles, similar to a model introduced in Ref.~\onlinecite{tamagnoneCoarseGrainedMolecularDynamics2024}. The polymer beads and the solvent particles all interact via a Lennard-Jones (LJ) potential. Within the polymer, the nearest and second-nearest neighbor distances (and hence the binding angle) of the polymer beads are softly constrained with a harmonic potential. In addition, a bi-modal potential is imposed on the end-to-end distance of the polymer, giving rise to two stable confirmations in an open and a closed state, see \cref{fig:polymer_sketch}. Denoting by $\{q_i\}_{i=1,\dots,9}$ the polymer coordinates and by $\{q_i\}_{i=10,\dots,N}$ the solvent coordinates, the total energy is given by
\begin{align}
    V(q) &= \sum_{i=1}^8 V_1(|q_i - q_{i+1}|) + \sum_{i=1}^7 V_2(|q_i - q_{i+2}|) \notag\\
    &+ \sum_{\substack{1\leq i \leq 6\\ i+3\leq j \leq 9}} V_{\mathrm{LJ}}(|q_i - q_{j}|) +V_e(|q_1-q_9|) \notag \\
    &+ \sum_{\substack{1\leq i \leq 9\\ 10\leq j \leq N}} V_{\mathrm{LJ}}(|q_i - q_{j}|) \notag\\
    &+ \sum_{10\leq i<j \leq N} V_{\mathrm{LJ}}(|q_i - q_{j}|), 
\end{align}
with a Lennard-Jones interaction $V_{\mathrm{LJ}}$, harmonic potentials $V_\kappa(r) = \frac{k_\kappa}{2}(r-d_\kappa)^2$ for $\kappa\in \{1,2\}$ and a double-well potential (acting on the end-to-end distance) $V_e(r) = \frac{k_e}{4}  (\frac{r-d_e}{a})^4 - k_e (\frac{r-d_e}{a})^2$  with two local minima at $r = d_e \pm a\sqrt{2}$. Precise parameter values and implementation details are provided in \cref{sec:AppPolymer}. 
\begin{figure}
    \centering
    \includegraphics{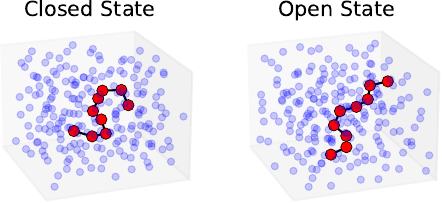}
    \caption{Open and closed state of the polymer (red) and the solvent particles (blue), characterized by the end-to-end distance}
    \label{fig:polymer_sketch}
\end{figure}

\begin{figure}
    \centering
    \includegraphics{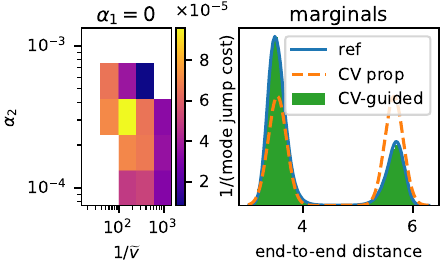}
    \caption{Simulation of the polymer with a one-dimensional CV (end-to-end distance). Left panel: Rough grid optimization of parameters for the deterministic case $\alpha_1=0$, each simulation run with 20 parallel chains and 2,000 iterations. Right panel: Marginals for a run of 20 chains and 100,000 iterations for the optimized parameters $1/\widetilde{v}=180$, $\alpha_2 = 2\times 10^{-4}$ (overall 8\% acceptance), shown alongside the CV proposal distribution and reference samples obtained from a biased simulation that has been reweighted.}
    \label{fig:polymer_1d}
\end{figure}
\begin{figure}
    \centering
    \includegraphics{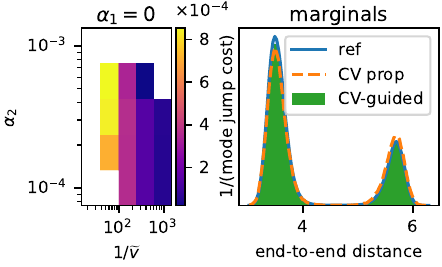}
    \caption{Simulation of the polymer with the entire polymer as the CV. Left panel: Rough grid optimization of parameters for the deterministic case $\alpha_1=0$, each simulation run with 20 parallel chains and 2,000 iterations. Right panel: Marginals for a run of 20 chains and 10,000 iterations for the optimized parameters $1/\widetilde{v}=60$, $\alpha_2 = 5\times 10^{-4}$ (99\% acceptance rate), shown alongside the marginalized CV proposal distribution and reference samples obtained from a biased simulation that has been reweighted.}
    \label{fig:polymer_27d}
\end{figure}

We test the algorithm for two different choices of CV: (i) the end-to-end distance, $\xi(q)=|q_1-q_9|$, a one-dimensional CV, and (ii) the entire polymer chain, $\xi(q) = (q_1,\ldots q_9)$, a 27-dimensional CV. For the latter, we define a proposal sampler with the help of a normalizing flow. 

When running simple MALA dynamics on this system, mode transitions are only observed very rarely, necessitating the use of an enhanced sampling technique. To obtain a reference dataset to compare against, we ran biased dynamics (removing the potential energy $V_e$ on the end-to-end distance) and unbiased them via reweighting.

An obvious choice of CV that resolves the bimodality is the end-to-end distance. Since it is one-dimensional, it is easy to construct a proposal distribution in the form of a Gaussian mixture. For simplicity, we extract the position of the modes and their width from short Langevin simulations on the full system (see \cref{sec:AppPolymer}). A rough grid search to optimize the parameters of the algorithm is shown in \cref{fig:polymer_1d}, with the best achieved inverse mode jump cost at around $9\times 10^{-5}$. In contrast, the overdamped algorithm with $\alpha_1=1$ does not show any transitions within the same parameter range and computational budget, giving an upper limit of the performance of at most $6\times 10^{-7}$; it is thus at least two orders of magnitude worse in performance.

Another possible choice of collective variable in this case is to use the positions of the beads of the entire polymer configuration. Here, a good proposal distribution requires a careful parameterization, for which we use a normalizing flow (see \cref{sec:AppPolymer}). Since our primary interest is in the performance of the algorithm, we trained our proposal sampler on already existing reference data. In practice, one could use adaptive schemes in which the CV-sampler is progressively trained while running the MCMC sampler\cite{gabrieAdaptiveMonteCarlo2022, tamagnoneCoarseGrainedMolecularDynamics2024}. Note that for this choice of CV, the steered dynamics in \textbf{Step 2} of the algorithm effectively only acts on the solvent particles, since the schedule already imposes the evolution of the entire polymer. Running the algorithm, we again observe that the deterministic dynamics ($\alpha_1=0$) gives good results with frequent mode switches while the overdamped version ($\alpha_1=1$) does not work at all. Strikingly, for the optimized parameters, we achieve an overall acceptance rate of 99\% with this algorithm. Since the proposals are drawn independently, this corresponds almost to the optimum performance with maximum decorrelation within the polymer degrees of freedom. A rough grid search for the deterministic algorithm is shown in \cref{fig:polymer_27d} together with the marginal distribution of the end-to-end distance. Comparing the optimal mode jump cost obtained with the two choices of CV, we see an advantage of around one order of magnitude in using the positions of all polymer beads over the end-to-end distance between first and last bead as a CV. This only accounts for the calls to the force, however, and not for the effort in learning the proposal sampler in CV space.

\section{Conclusion and outlook}
\label{sec:conclusion}
In this work, we present an MCMC algorithm with non-local proposals in a collective variable space. This CV space may be high-dimensional (and thus classic free energy adaptive biasing methods do not apply), provided a good proposal sampler can still be learned (e.g. by a normalizing flow for CVs of moderately large dimensions). We generalize related algorithms by considering the underdamped Langevin dynamics and a non-linear collective variable and we prove reversibility of the resulting scheme. We provide a detailed description of the algorithm so that it is readily implementable and we illustrate its performance on a number of examples, ranging from a toy low-dimensional example to molecular systems. We demonstrate a significant performance advantage in using Hamiltonian over overdamped steered Langevin dynamics. 
As mentioned in the introduction, our algorithm enters into frameworks like NCMC and HNMD, but we treat here all technical details arising from a non-linear CV and provide a numerical recipe for the use of general underdamped Langevin dynamics. 

There is a tradeoff in this kind of algorithm between the facility of finding a `good' CV and learning a good proposal sampler in that CV space. The system needs to be able to follow the constrained dynamics (\textbf{Step 2} of the algorithm) and still arrive at a low-energy configuration, for which it is vital that the CV resolves the metastability (see also Appendix B in Ref. \onlinecite{SCHONLE2025113806}). Evidently, this is easier to achieve allowing the CV to be higher-dimensional. In the (absurd) limiting case where the CV comprises all degrees of freedom of the system, the `steering' is trivially simple. At the same time, the higher-dimensional the CV space becomes, the harder it will be to parameterize and learn an efficient proposal sampler. Our polymer example with the 27-dimensional CV shows how there can be a sweet spot at intermediate dimensionality. Although we do not account for the training effort here, we demonstrate that a normalizing flow can be trained to such high accuracy that it can be used efficiently in the framework of this algorithm. When the proposal sampler has to be learned, one may resort to adaptive frameworks alternating between sampling and training, which have been demonstrated to be effective for normalizing flows\cite{gabrieAdaptiveMonteCarlo2022}, assuming that a good initial guess of the mode locations is available.

We leave certain questions for future investigations, notably the performance of the algorithm with respect to different choices of CV, as well as the choice of schedule (path) in CV space and the training of the proposal sampler. Furthermore, it would be very interesting to apply this algorithm to more complex molecular systems such as the conformational equilibrium of biomolecules, which is something we leave for future work.

\section*{Author declarations}
\subsection*{Conflict of interest}
The authors have no conflicts to disclose.

\section*{Data availability}
A running implementation of the algorithms and all code to reproduce results presented in this work are publicly available at \\
\url{https://github.com/cschoenle/mcmc_nonlincv}. 
Simulation  data can be reproduced at reasonable computational cost.

\begin{acknowledgments}
The authors thank warmly Alessandro Laio and Samuel Tamagnone for their help in setting up the last numerical example of the polymer. The authors remember fondly Samuel.
This work was supported by Hi! Paris, a French government grant managed by the Agence Nationale de la Recherche under the France 2030 program, reference ANR-23-IACL-0005, and supported as well by PR[AI]RIE-PSAI, also a French government grant managed by the Agence Nationale de la Recherche under the France 2030 program, reference ANR-23-IACL-0008. The works of T.L. and G.S. are partially
funded by the European Research Council (ERC) under the European Union’s Horizon 2020 research and innovation programme (project EMC2, grant agreement No 810367), and from the Agence Nationale de la Recherche through the grants ANR-19-CE40-0010-01 (QuAMProcs) and ANR-21-CE40-0006 (SINEQ).
This work was partially performed using HPC resources from GENCI–IDRIS (AD011015234R1).
\end{acknowledgments}

\bibliography{refs}
\clearpage
\appendix
\onecolumngrid
\section{Numerical implementation of the evolution of momenta}
\label{app:numericalformulation}
For practical purposes, one can reformulate \eqref{eq:flucdissconstjarz1} (and \eqref{eq:flucdissconstjarz2} in the same fashion) in the following equivalent form that is more suitable to numerical implementation, see Sec.~5.4.2 in Ref.~\onlinecite{lelievre_langevin_2012} for further explanations:
\begin{equation}
\left \{ 
\begin{aligned}
& p^{k+1/4} = p^k  - \frac{\dt}{4} \gamma M^{-1}  
\pare{ p^k + p^{k+1/4} - 2\nabla\xi G_{M}^{-1}(q^k) 
v_z(t_{k}) } \\
& \phantom{p^{k+1/4} =} + \sqrt{\frac{\dt}{2}} \, \sigma \, {\mathcal G}^k 
+ \nabla\xi(q^k)\, \lambda^{k+1/4},\\
& \nabla\xi(q^k)^\top M^{-1}p^{k+1/4} = v_z(t_{k}),  \qquad\qquad(C_p)
\end{aligned} \right.
\end{equation}
where the Lagrange multiplier $\lambda^{k+1/4}$ is associated with the constraint $(C_p)$. It can be determined analytically by solving
\begin{align*}
v_z(t_{k}) = \nabla\xi(q^k)^\top &M^{-1} (\mathrm{Id}  + \frac{\dt}{4} \gamma M^{-1})^{-1} \\
&\pare{ (\mathrm{Id}  - \frac{\dt}{4} \gamma M^{-1})p^k  
 + \frac{\dt}{2} \gamma M^{-1}\nabla\xi G_{M}^{-1}(q^k) 
v_z(t_{k}) + \sqrt{\frac{\dt}{2}} \, \sigma \, {\mathcal G}^k 
+ \nabla\xi(q^k) \, \lambda^{k+1/4}},
\end{align*}
assuming that $\nabla\xi(q^k)^\top M^{-1} (\mathrm{Id}  + \frac{\dt}{4} \gamma M^{-1})^{-1} \nabla\xi(q^k)$ is invertible.
\section{Proof of the reversibility of the algorithm}
\label{sec:appendix_reversibility}
The objective of this section is to prove the reversibility of the algorithm introduced in Section~\ref{subsec:algorithm} (namely~\cref{theo:reversbility}). This is done in three steps. In Section~\ref{sec:appendix_reversibility1}, we introduce the necessary material, in particular concerning measures on submanifolds associated with the collective variable $\xi$. Then, in Section~\ref{sec:appendix_reversibility2}, we present fundamental properties of the steered dynamics~\eqref{eq:flucdissconstjarz1}-\eqref{eq:flucdissconstjarz2}, and in particular the Jazynski-Crooks equality. This equality is the cornerstone of the proof of reversibility, which is finally provided in Section~\ref{sec:appendix_reversibility3}.
\subsection{Introducing notation}\label{sec:appendix_reversibility1}
This subsection largely uses tools and results from Ref.~\onlinecite{lelievre_langevin_2012}. Let us introduce a number of quantities which we will need later on. 
A central quantity of interest is the free energy $F:\mathbb{R}^{\ell}\rightarrow \mathbb{R}$ associated with the target measure $\nu$ and the collective variable $\xi$, defined as (see \eqref{eq:Sigmaz} for the definition of $\Sigma(z)$)
\begin{align}
    F(z) = -\frac{1}{\beta} \log \int_{\Sigma(z)} \mathcal{Z}_q^{-1}e^{-\beta V(q)}\delta_{\xi(q)-z}(\mathrm{d}q).
    \label{eq:free_energy}
\end{align}
The phase space associated with constraints on the collective variable $\xi$ and effective velocity $v_\xi$ is denoted as
\begin{align}
    \Sigma_{\xi, v_\xi}(z, v_z)  = \left\{(q,p)\in\mathbb{R}^{2d}\, \Big | \, \xi(q) = z, \, v_\xi(q,p)=v_z \right\}.
\end{align}
For a fixed position $q$, we denote the affine space of constrained momenta as
\begin{align}
    \Sigma_{v_\xi(q,\cdot)}(v_z) = \left\{p\in \mathbb{R}^{d}\,\middle| \, v_\xi(q,p) = v_z\right\}.
\end{align}
Note that $T^*_q\Sigma(z) = \Sigma_{v_\xi(q,\cdot)}(0)$, which was defined in \eqref{eq:cotangent_zero_speed}.
The Boltzmann--Gibbs probability distribution can also be defined on these constrained spaces as follows:
\begin{align}
    \begin{cases}
    \ds \mu_{\Sigma_{\xi,v_\xi}(z,v_z)}(\mathrm{d}q\, \mathrm{d}p ) := \frac{ {\rm e}^{-\beta H(q,p)}}{\mathcal{Z}_{z,v_z}} \sigma_{\Sigma_{\xi,v_\xi}(z,v_z)}(\mathrm{d}q\, \mathrm{d}p ), \\
    \ds \mathcal{Z}_{z,v_z} := \int_{\Sigma_{\xi,v_\xi}(z,v_z)} {\rm e}^{-\beta H} \, \mathrm{d} \sigma_{\Sigma_{\xi,v_\xi}(z,v_z)} ,
  \end{cases}
  \label{eq:constrainedBoltzmann}
\end{align}
where $\sigma_{\Sigma_{\xi,v_\xi}(z,v_z)}(\mathrm{d}q\, \mathrm{d}p)$ is the phase space Liouville measure on $\Sigma_{\xi,v_\xi}(z,v_z)$.

Using the Fixman term $V_\mathrm{fix}$ introduced in \cref{eq:deffixman}, it is possible to relate the free energy~\eqref{eq:free_energy} to the normalisation of the restricted Boltzmann--Gibbs measure \eqref{eq:constrainedBoltzmann}. More precisely, the modified potential  $\widetilde{V}(q) = V(q) + V_\mathrm{fix}(q)$ gives rise to a new Hamiltonian
denoted $\widetilde{H}(p,q) = H(q,p)+V_\mathrm{fix}(q)$ and an associated constrained measure $\widetilde{\mu}_{\Sigma_{\xi,v_\xi}}$  with normalizing constant $\widetilde{\mathcal{Z}}_{z,v_z}$ (using the formulas in \eqref{eq:constrainedBoltzmann} with $H$ replaced by $\widetilde{H}$).
We then obtain the following reformulation of free energy differences in terms of ratios of the normalizing constants $\widetilde{\mathcal{Z}}_{z,v_z}$.
\begin{lem}
For all $(z,v_z) \in \R^\ell \times \R^\ell$, it holds that
\begin{align*}
    \widetilde{\mathcal{Z}}_{z,v_z} = (\det M)^{\frac 1 2} \left(2\pi \beta^{-1}\right)^{\frac{d-\ell}{2}} \int_{\Sigma(z)} e^{-\beta V(q)} \exp\left(-\frac{\beta}{2}v_z^\top G_M(q)^{-1} v_z\right) \delta_{\xi(q)-z} (\mathrm{d}q).
\end{align*}    
In particular, for any $z_1,z_2 \in \R^\ell$,
\begin{align}
    \frac{\widetilde{\mathcal{Z}}_{z_1,0}}{\widetilde{\mathcal{Z}}_{z_2,0}} = e^{-\beta (F(z_1) - F(z_2))}.
    \label{eq:free_energy_relations}
\end{align}
\label{lem:free_energy_relations}
\end{lem}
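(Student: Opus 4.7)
The plan is to compute the constrained partition function $\widetilde{\mathcal{Z}}_{z,v_z}$ explicitly by integrating out the momentum $p$, and then to deduce the free energy ratio by specializing $v_z=0$ and using the definition~\eqref{eq:free_energy} of $F$ together with the normalization $Z_q$ of $\nu$.

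The main step is a disintegration of the integral defining $\widetilde{\mathcal{Z}}_{z,v_z}$ over the constraint manifold $\Sigma_{\xi,v_\xi}(z,v_z)$. Using the co-area formula for the map $(q,p) \mapsto (\xi(q), v_\xi(q,p))$, I would write it as an integral over $q \in \Sigma(z)$ against $\delta_{\xi(q)-z}(\mathrm{d}q)$, followed for each fixed $q$ by an integral over the affine momentum fiber $\Sigma_{v_\xi(q,\cdot)}(v_z) = \{p \in \R^d : \nabla\xi(q)^\top M^{-1} p = v_z\}$. The Jacobian arising in this disintegration is governed by the constraint gradients and produces a factor involving $\det G_M(q)$, which is precisely the quantity appearing in the Fixman term~\eqref{eq:deffixman}.

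For the inner momentum integral, I would use the $M^{-1}$-orthogonal decomposition
\[
    p = \nabla\xi(q) G_M(q)^{-1} v_z + p_\parallel, \qquad p_\parallel \in T_q^*\Sigma(z),
\]
which satisfies the constraint $\nabla\xi(q)^\top M^{-1} p = v_z$ by construction and splits the kinetic energy as $p^\top M^{-1} p = v_z^\top G_M(q)^{-1} v_z + p_\parallel^\top M^{-1} p_\parallel$ thanks to the orthogonality property $p_0^\top M^{-1} p_\parallel = 0$. The $v_z$-dependent factor $\exp(-\beta v_z^\top G_M(q)^{-1} v_z/2)$ then pulls out of the inner integral, and after the change of variables $u = M^{-1/2}p_\parallel$ the remaining Gaussian on the $(d-\ell)$-dimensional subspace $T_q^*\Sigma(z)$ evaluates to $(\det M)^{1/2}(2\pi/\beta)^{(d-\ell)/2}$ up to a residual power of $\det G_M(q)$ coming from the transverse geometry of that subspace.

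Combining the Jacobian from the disintegration step and the residual determinant from the Gaussian integration with the contribution $e^{-\beta V_{\rm fix}(q)} = (\det G_M(q))^{-1/2}$ coming from the Fixman correction in $\widetilde{H}$, all powers of $\det G_M(q)$ collapse exactly, leaving $e^{-\beta V(q)}$ in the integrand. This yields the first identity of the lemma. For the second identity, setting $v_z=0$ gives $\widetilde{\mathcal{Z}}_{z,0} = (\det M)^{1/2} (2\pi\beta^{-1})^{(d-\ell)/2} Z_q \, e^{-\beta F(z)}$, and the ratio at $z_1$ and $z_2$ trivially cancels all $z$-independent prefactors, yielding~\eqref{eq:free_energy_relations}. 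The main obstacle lies in bookkeeping the various Jacobian and determinant factors consistently through the disintegration and the momentum change of variables; their exact cancellation against the Fixman term is the conceptual reason for introducing $V_{\rm fix}$ in $\widetilde{V}$ in the first place.
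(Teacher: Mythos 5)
Your proof proposal follows essentially the same route as the paper: disintegrate $\widetilde{\mathcal{Z}}_{z,v_z}$ into an outer position integral over $\Sigma(z)$ and an inner fibrewise Gaussian over the momentum affine space, use the $M^{-1}$-orthogonal shift $p \mapsto p - \nabla\xi(q)G_M^{-1}(q)v_z$ (your ``$p_0 + p_\parallel$'' decomposition) to pull out $\exp(-\tfrac{\beta}{2}v_z^\top G_M^{-1}v_z)$, evaluate the residual Gaussian, and let the Fixman factor $e^{-\beta V_{\rm fix}}=(\det G_M)^{-1/2}$ absorb the remaining Jacobian; specialization to $v_z=0$ and division then gives~\eqref{eq:free_energy_relations}.

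One caution on the bookkeeping you leave implicit: the assignment of determinant factors to the Gaussian integral versus the disintegration step depends entirely on which reference measures you choose on the fibres. The paper uses the $M$-adapted surface measure $\sigma^M_{\Sigma(z)}$ on positions and the $M^{-1}$-adapted surface measure $\sigma^{M^{-1}}_{\Sigma_{v_\xi(q,\cdot)}(v_z)}$ on momenta, for which the Liouville measure factors cleanly as a product (Eq.~\eqref{eq:surfacemeas}); with that choice the momentum Gaussian yields exactly $(2\pi\beta^{-1})^{(d-\ell)/2}$ with \emph{no} determinant factors, and both $(\det M)^{1/2}$ and a power of $\det G_M$ emerge only from the co-area formula $\delta_{\xi(q)-z}(\mathrm{d}q)=(\det M)^{-1/2}|\det G_M(q)|^{-1/2}\sigma^M_{\Sigma(z)}(\mathrm{d}q)$. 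Your variant, which changes variables via $u=M^{-1/2}p_\parallel$ and ascribes a $(\det M)^{1/2}$ to the Gaussian, is not wrong, but the ``residual power of $\det G_M$'' you gesture at is then not a bare power of $\det G_M$ alone (for a non-isotropic $M$ it also involves the restriction of $M$ to $T_q^*\Sigma(z)$), so the claimed exact collapse needs to be actually carried through rather than asserted; using the adapted surface measures as the paper does is the clean way to make this bookkeeping transparent.
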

\begin{proof}
    The proof relies on Ref.~\onlinecite{lelievre_langevin_2012}.
    From Sec.~2.3 therein, we restate the following two identities: By the co-area formula, one can show that 
    \begin{equation}
    \label{eq:coarea}
    \delta_{\xi(q)-z}(\mathrm{d}q) = 
    \pare{{\rm det  }\, M}^{-1/2} \big| {\rm det  } \, G_M(q) \big|^{-1/2} \,
    \sigma^{M}_{\manq(z)}(\mathrm{d}q ).
  \end{equation}
  Furthermore, the measure $\sigma_{\Sigma_{\xi,v_\xi}(z,v_z)}$ can be written as a product of two measures:
\begin{equation}
  \label{eq:surfacemeas}
  \sigma_{\manq_{\xi,v_\xi}(z,v_z)}(\mathrm{d}q \, \mathrm{d}p) =
  \sigma^{M^{-1}}_{\manq_{v_\xi(q,\cdot)}(v_z)}(\mathrm{d}p) \, \sigma^{M}_{\manq(z)}(\mathrm{d}q),
 \end{equation}
 where we denote
by $\sigma^{M}_{\manq(z)}(\mathrm{d}q)$ the surface measure on $\manq(z)$ induced by the scalar
product $\langle q,\tilde{q}\rangle_M = q^\top M \tilde{q}$ on $\R^{d}$, and, 
for a given $q \in \manq(z)$, by
$\sigma^{M^{-1}}_{\manq_{v_\xi(q,\cdot)}(v_z)}(\mathrm{d}p)$ the
surface measure on the affine
space $\manq_{v_\xi(q,\cdot)}(v_z)$, 
induced by the scalar product
$\langle p,\tilde{p}\rangle_{M^{-1}} = p^\top M^{-1} \tilde{p}$ on~$\R^{d}$.
For more precise definitions of these measures, we refer 
to Ref.~\onlinecite[Sections~3.2.1 and~3.3.2]{lelievre-rousset-stoltz-book-10}
and the references therein.

Using the above identities, we write
\begin{align*}
    \widetilde{\mathcal{Z}}_{z,v_z} &= \int_{\Sigma_{\xi, v_\xi}(z, v_z)} e^{-\beta \widetilde{H}(q,p)} \sigma_{\Sigma_{\xi, v_\xi(z, v_z)}} (\mathrm{d} q \, \mathrm{d}p) \\
    &= \int_{\Sigma_{\xi, v_\xi}(z, v_z)} e^{-\beta H(q,p)} (\det(G_M(q)))^{-1/2} \sigma^{M^{-1}}_{\Sigma_{v_\xi(q,\cdot)}(v_z)}(\mathrm{d} p) \, \sigma^M_{\Sigma(z)} (\mathrm{d} q)\\
    &= \int_{\Sigma(z)} \int_{\Sigma_{v_\xi(q,\cdot)}(v_z)} \exp\left(-\frac{\beta}{2}p^\top M^{-1}p\right)\sigma^{M^{-1}}_{\Sigma_{v_\xi(q,\cdot)}(v_z)}(\mathrm{d} p) \; e^{-\beta V(q)} (\det(G_M(q)))^{-1/2} \sigma^M_{\Sigma(z)} (\mathrm{d} q)  \\
    &= \int_{\Sigma(z)} \left(2\pi \beta^{-1}\right)^{\frac{d-\ell}{2}} \exp\left(-\frac{\beta}{2}v_z^\top G_M(q)^{-1} v_z\right) \; e^{-\beta V(q)} (\det(G_M(q)))^{-1/2} \sigma^M_{\Sigma(z)} (\mathrm{d} q)  \\    
    &= (\det M)^{\frac 1 2} \left(2\pi \beta^{-1}\right)^{\frac{d-\ell}{2}} \int_{\Sigma(z)} e^{-\beta V(q)} \exp\left(-\frac{\beta}{2}v_z^\top G_M(q)^{-1} v_z\right) \delta_{\xi(q)-z} (\mathrm{d}q).
\end{align*}
Equation~\eqref{eq:free_energy_relations} follows by taking $v_z=0$. 
For the last but one equality in the series of equalities above, we use a computation from Sec.~5.1 in Ref.~\onlinecite{lelievre_langevin_2012}, namely, for fixed $q$,
\begin{align*}
    &\int_{\Sigma_{v_\xi(q,\cdot)}(v_z)} \exp\left(-\frac{\beta}{2}p^\top M^{-1}p\right)\sigma^{M^{-1}}_{\Sigma_{v_\xi(q,\cdot)}(v_z)}(\mathrm{d} p) \\
    &= \exp\left(-\frac{\beta}{2}v_z^\top G_M(q)^{-1} v_z\right) \int_{\Sigma_{v_\xi(q,\cdot)}(0)} \exp\left(-\frac{\beta}{2}p^\top M^{-1}p\right) \sigma^{M^{-1}}_{\Sigma_{v_\xi(q,\cdot)}(0)} (\mathrm{d}p) \\
    &= \left(2\pi \beta^{-1}\right)^{\frac{d-\ell}{2}} \exp\left(-\frac{\beta}{2}v_z^\top G_M(q)^{-1} v_z\right).
\end{align*}
The first equality follows from a change of variable: the space $\Sigma_{v_\xi(q,\cdot)}(v_z)$ is composed of momenta fulfilling $0 = v_\xi(q,p) - v_z = \nabla\xi(q)^\top M^{-1} \left(p - \nabla\xi(q)G_M^{-1}(q)v_z\right)$, and therefore, when replacing $p$ by $p - \nabla\xi(q)G_M^{-1}(q)v_z$, the resulting integral is taken over the space $\Sigma_{v_\xi(q,\cdot)}(0)$.
\end{proof}

\begin{rem}
\label{rem:free_energy_relations_nofixman}
    When the Gram tensor $G_M(q)$ is independent of $q$, the introduction of the Fixman term becomes unnecessary and one furthermore has that
    \begin{align}
    \frac{\widetilde{\mathcal{Z}}_{z_1,v_z}}{\widetilde{\mathcal{Z}}_{z_2,v_z}} = \frac{\mathcal{Z}_{z_1,v_z}}{\mathcal{Z}_{z_2,v_z}} = e^{-\beta (F(z_1) - F(z_2))}
    \label{eq:free_energyrelation_nofixman}
\end{align}
for any velocity $v_z\in \mathbb{R}^\ell$.
\end{rem}

\subsection{Properties of the steered dynamics}\label{sec:appendix_reversibility2}

In this section, we state a number of properties of the discretization scheme introduced in \eqref{eq:flucdissconstjarz1}-\eqref{eq:flucdissconstjarz2}. Since we follow dynamics for the energy modified by the Fixman term, in principle we would need to put a tilde on all associated measures and quantities, but we omit it here for the sake of readability and just consider the dynamics for a general potential $V$.
\subsubsection{Forward steered dynamics}
\label{subsubsec:appendix_forw}
Let us make two remarks concerning the forward steered dynamics \eqref{eq:flucdissconstjarz1}-\eqref{eq:flucdissconstjarz2}
\begin{rem}
\label{rem:existence_lagrange_multip}
    In \cref{eq:Verletconstswitched}, the constraint $(C_p)$  yields a unique well-defined Lagrange multiplier $\lambda^{k+3/4}$ which can be obtained by solving the linear equation
    \begin{align*}
        v_z(t_{k+1}) &= \nabla \xi (q^{k+1})^{T} M^{-1}  p^{k+1/2} - \frac{\dt}{2} \nabla \xi (q^{k+1})^{T} M^{-1}  \nabla V (q^{k+1}) + \nabla \xi (q^{k+1})^{T} M^{-1} \nabla \xi(q^{k+1}) \lambda^{k+3/4},
    \end{align*}
    since the Gram tensor $G_M(q^{k+1}) = \nabla \xi (q^{k+1})^{T} M^{-1} \nabla \xi(q^{k+1})$ is invertible by assumption.
    In contrast, the constraint $(C_q)$ leads to a non-linear equation for $\lambda^{k+1/2}$, which is solved using Newton's method. 
    Regarding uniqueness of the solution and  convergence of this solver, we require a reversibility property between the forward dynamics and the backward steered dynamics introduced in \cref{subsubsec:bckwd_dynamics}. This property is explained at the end of \cref{subsubsec:symplecticityflow}, defined by \eqref{eq:rev_forwbackw_flows}. Loosely speaking, one needs to have a consistent procedure which gives the same solution when following the forward or backward dynamics.
\end{rem}
\begin{rem}
\label{rem:velocity_schedule_choice}
    Unlike the standard RATTLE discretization, our formulation considers time-evolving constraints. Moreover, in contrast to Ref.~\onlinecite[Sec. 5.4.1]{lelievre_langevin_2012}, where such versions of RATTLE are also considered, it uses a separate schedule for the velocity $(v_z^{(Z_n, \widetilde{Z}_{n+1})}(t_k))_{0\le n \le  K_{T}}$, that can in principle be chosen independently of the position schedule $(z^{(Z_n, \widetilde{Z}_{n+1})}(t_k))_{0\le k \le  K_{T}}$. Still, it is numerically advantageous to use a schedule with a meaningful limit in continuous time, such that (in the limit $\Delta t \to 0$) the position schedule converges to $\xi(q(t)) = z(t)$ where $z$ is a fixed continuously-differentiable function  and such that the velocity schedule converges to $v_z(t) = \dot{z}(t)$. To see this, we write the unconstrained part of the update in \eqref{eq:Verletconstswitched} as
    \[
        \widetilde{q}^{k+1} = q^{k} + \dt \ M^{-1} p^{k+1/4} - \frac{\dt^2}{2} \ M^{-1} \nabla V (q^{k}),
    \] which is followed by a projection with $q^{k+1} = \widetilde{q}^{k+1} + \Delta t M^{-1}\nabla\xi(q^k)\lambda^{k+1/2}$, where $\lambda^{k+1/2}$ is such that $\xi(q^{k+1}) = z(t_{k+1})$. Two examples of velocity schedule that converge to $v_z(t) = \dot{z}(t)$ in the continuous-time limit are (1) trivially taking $v_z(t_k) = \dot{z}(t_k)$ (as in \eqref{eq:schedule_function_z}-\eqref{eq:schedule_function_vz} in \cref{subsubsec:schedule}) or (2) the discretization \[v_z(t_k) = \frac{z(t_{k+1}) - z(t_{k-1})}{2\Delta t},\,  0 \leq k \leq K_T.\] 
    For this second choice, satisfying the condition \eqref{eq:zerovelocity_condition} ($v_z(t_0) = v_z(t_{K_T}) = 0$) implies extending the position schedule to times $t_{-1}$ and $t_{K_T+1}$ and setting $z(t_{-1}) = z(t_1)$ and $z(t_{K_T+1}) = z(t_{K_T-1})$.    
    For both choices, one has that $\xi(\widetilde{q}^{k+1}) = z(t_{k+1}) + \mathcal{O}(\Delta t^2)$ (using that $K_T=T/\Delta t = \mathcal{O}(\Delta t^{-1})$, where $T$ is the physical transition time) and thus the requirement $(C_q)$ is already approximately fulfilled by $\widetilde{q}^{k+1}$, facilitating convergence of the Newton solver for $\lambda^{k+1/2}$. Note, however, that the simplest choice of discretization $v_z(t_k) = \frac{z(t_{k+1}) - z(t_{k})}{\Delta t}$ which was used in Ref.~\onlinecite[Sec. 5.4.1]{lelievre_langevin_2012}, is not possible in our context since it would violate the reversibility property \eqref{eq:rev_velocities}. 
\end{rem}
\subsubsection{Backward steered dynamics}
\label{subsubsec:bckwd_dynamics}
For given position schedule $(z(t_k))_{0\le k \le  K_{T}}$ and velocity schedule $(v_z(t_k))_{0\le k \le  K_{T}}$, we define the following backward process. Starting from $(q^{\mathrm{b},0}, p^{\mathrm{b},0})\in \Sigma_{\xi,v_\xi}(z(t_{K_T}),v_z(t_{K_T}))$, iterate the following procedure on $0 \leq k' \leq K_T-1$:
\begin{align}
&  
\dps    p^{{\rm b},k'+1/4} =   p^{{\rm b},k'} -\frac{\dt}{4} \gamma_P(q^{{\rm b},k'}) M^{-1}(p^{{\rm b},k'+1/4}+p^{{\rm b},k'}) + \sqrt{ \frac{\dt}{2}} \sigma_P(q^{{\rm b},k'}) {\mathcal G}^{{\rm b},k'}\label{eq:flucdissconstjarzbck1} \\[6pt]
&\begin{cases}
 \dps  p^{{\rm b},k'+1/2} = p^{{\rm b},k'+1/4} + \displaystyle{\frac{\dt}{2} \nabla V (q^{{\rm b},k'})} + \nabla \xi(q^{{\rm b},k'}) \lambda^{{\rm b},k'+1/2} &\\[6pt]
 \dps  q^{{\rm b},k'+1} = q^{{\rm b},k'} - \dt \ M^{-1} p^{{\rm b},k'+1/2}  \\[6pt]
 \dps   \xi(q^{{\rm b},k'+1})  = z(t_{K_T-k'-1}) & (C_q) \\[6pt]
 \dps  p^{{\rm b},k'+3/4} = p^{{\rm b},k'+1/2} + \displaystyle{\frac{\dt}{2} \nabla V (q^{{\rm b},k'+1})} +\nabla \xi(q^{{\rm b},k'+1}) \lambda^{{\rm b},k'+3/4}&\\[6pt]
  \dps   \nabla \xi (q^{{\rm b},k'+1})^{T} M^{-1} p^{{\rm b},k'+3/4} = v_z(t_{K_T-k'-1}) &(C_p)
\end{cases}\label{eq:Verletconstswitchedbck}\\
&   \dps    p^{{\rm b},k'+1} = p^{{\rm b},k'+3/4} -\frac{\dt}{4} \gamma_P(q^{{\rm b},k'+1}) M^{-1}(p^{{\rm b},k'+3/4}+p^{{\rm b},k'+1})+ \sqrt{\frac{\dt}{2}} \sigma_P(q^{{\rm b},k'+1}) {\mathcal G}^{{\rm b},k'+1/2}, \label{eq:flucdissconstjarzbck2}
\end{align}
where $({\mathcal G}^{{\rm b},k'})$ and $({\mathcal G}^{{\rm b},k'+1/2})$ are independent sequences of i.i.d. centered Gaussian random vectors with covariance matrix~$\mathrm{Id}_{d}$ and $\lambda^{{\rm b},k'+1/2}$ and $\lambda^{{\rm b},k'+3/4}$ are Lagrange multipliers such that the conditions $(C_q)$ and $(C_p)$ are fulfilled. 
\subsubsection{Relation between processes with opposite endpoints}
Thanks to the reversibility property of the schedule \eqref{eq:rev_positions}-\eqref{eq:rev_velocities}, we can relate the law of the forward and backward steered dynamics with schedules with swapped endpoints one to another: With the right initialization, the solution of the backward dynamics for a schedule with endpoints $(z,\widetilde{z})$ has the same law as the forward dynamics for a schedule with endpoints $(\widetilde{z},z)$ up to a momentum flip. We make the statement more precise in the following lemma.
\begin{lem}
Consider a forward process~$\left((q^k, p^k, p^{k+1/4}, p^{k+3/4})_{0\leq k\leq K_T-1}, q^{K_T}, p^{K_T}\right)$ for a schedule $(z^{( \widetilde{z},z)}(t_k))_{0\le k \le  K_{T}}$, $(v_z^{(\widetilde{z},z)}(t_k))_{0\le k \le  K_{T}}$ with initial conditions 
\begin{equation}
    (q^0,p^0) \sim \mu_{\manq_{\xi,v_\xi} \pare{ z^{(\widetilde{z},z)}(t_0), v_z^{(\widetilde{z},z)}(t_0)} }(\mathrm{d}q\, \mathrm{d}p ),
\end{equation}
and a backward process~$\left((q^{\mathrm{b},k}, p^{\mathrm{b},k}, p^{\mathrm{b},k+1/4}, p^{\mathrm{b},k+3/4})_{0\leq k \leq K_T-1}, q^{\mathrm{b}, K_T}, p^{\mathrm{b}, K_T}\right)$ for a schedule with swapped endpoints $(z^{(z,\widetilde{z})}(t_k))_{0\le k \le  K_{T}}$, $(v_z^{(z,\widetilde{z})}(t_k))_{0\le k \le  K_{T}}$ and initial conditions 
\begin{equation}
  (q^{ { \rm b},0},p^{ {\rm b},0}) \sim \mu_{\manq_{\xi,v_\xi} \pare{ z^{(z, \widetilde{z})}(t_{K_T}), v_z^{(z,\widetilde{z})}(t_{K_T})  } }(\mathrm{d}q\, \mathrm{d}p ).
\end{equation}
Then, this backward process has the same law as the momentum flipped version of the forward process, $\left((q^{k},-p^k, -p^{k+1/4}, -p^{k+3/4})_{0\leq k \leq K_T-1}, q^{K_T}, -p^{K_T}\right)$.
    \label{lem:forw_backw_same_law}
\end{lem}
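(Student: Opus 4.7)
The plan is to verify that the transformation $(q^k, p^k, p^{k+1/4}, p^{k+3/4}) \mapsto (q^k, -p^k, -p^{k+1/4}, -p^{k+3/4})$ maps the forward scheme with endpoints $(\widetilde z, z)$ onto the backward scheme with endpoints $(z,\widetilde z)$ in the sense of equality in distribution, by proceeding one step at a time with the identification $k' = k$. The first step is to unpack the schedule bookkeeping: thanks to the reversibility properties~\eqref{eq:rev_positions}--\eqref{eq:rev_velocities}, one has $z^{(z,\widetilde z)}(t_{K_T-k}) = z^{(\widetilde z, z)}(t_k)$ and $v_z^{(z,\widetilde z)}(t_{K_T-k}) = -v_z^{(\widetilde z, z)}(t_k)$, so both processes traverse exactly the same sequence of CV values and opposite sequences of CV velocities. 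In particular, the constraints $(C_q)$ and $(C_p)$ for the forward process at index $k+1$ match those of the backward process at index $k'+1 = k+1$ after substituting $p \to -p$.

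Next I would check that the initial laws match. The forward process is initialized under $\mu_{\Sigma_{\xi,v_\xi}(z^{(\widetilde z,z)}(t_0), v_z^{(\widetilde z,z)}(t_0))}$, and the backward one under $\mu_{\Sigma_{\xi,v_\xi}(z^{(z,\widetilde z)}(t_{K_T}), v_z^{(z,\widetilde z)}(t_{K_T}))}$. Using the schedule reversibility, the latter submanifold equals $\Sigma_{\xi,v_\xi}(z^{(\widetilde z,z)}(t_0), -v_z^{(\widetilde z,z)}(t_0))$, so since $H(q,p)=H(q,-p)$ and $p \mapsto -p$ maps $\Sigma_{v_\xi(q,\cdot)}(v_z)$ onto $\Sigma_{v_\xi(q,\cdot)}(-v_z)$ while preserving the surface measure $\sigma^{M^{-1}}$, the Boltzmann-Gibbs measure is invariant under the momentum flip combined with the velocity-sign swap. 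Hence $(q^0, -p^0)$ has the same distribution as $(q^{\mathrm b,0}, p^{\mathrm b,0})$.

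The inductive step consists in verifying that if $(q^k, -p^k) \stackrel{d}{=} (q^{\mathrm b,k}, p^{\mathrm b,k})$, then one full iteration of the forward scheme with momenta flipped satisfies the backward RATTLE equations in law. For the middle Hamiltonian block \eqref{eq:Verletconstswitched}, set $\widehat p = -p$ and $\widehat\lambda = -\lambda$: the forward update $p^{k+1/2} = p^{k+1/4} - \tfrac{\Delta t}{2}\nabla \widetilde V(q^k) + \nabla\xi(q^k)\lambda^{k+1/2}$ becomes $\widehat p^{k+1/2} = \widehat p^{k+1/4} + \tfrac{\Delta t}{2}\nabla \widetilde V(q^k) + \nabla\xi(q^k)\widehat\lambda^{k+1/2}$, matching~\eqref{eq:Verletconstswitchedbck}; likewise $q^{k+1} = q^k + \Delta t M^{-1} p^{k+1/2}$ becomes $q^{k+1} = q^k - \Delta t M^{-1} \widehat p^{k+1/2}$. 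The constraint $(C_q)$ $\xi(q^{k+1}) = z^{(\widetilde z,z)}(t_{k+1})$ coincides with the backward constraint by the schedule identification, and the constraint $(C_p)$ becomes $\nabla\xi(q^{k+1})^\top M^{-1} \widehat p^{k+3/4} = -v_z^{(\widetilde z,z)}(t_{k+1}) = v_z^{(z,\widetilde z)}(t_{K_T-k-1})$, as required. By uniqueness of the Lagrange multipliers (\cref{rem:existence_lagrange_multip}), this determines $\widehat\lambda^{k+1/2}, \widehat\lambda^{k+3/4}$ as the backward multipliers.

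For the fluctuation-dissipation parts~\eqref{eq:flucdissconstjarz1} and~\eqref{eq:flucdissconstjarz2}, substituting $p\to -p$ flips the sign of both the deterministic and the stochastic terms, giving
\[
\widehat p^{k+1/4} = \widehat p^k - \tfrac{\Delta t}{4}\gamma_P(q^k) M^{-1}(\widehat p^{k+1/4} + \widehat p^k) - \sqrt{\tfrac{\Delta t}{2}}\sigma_P(q^k)\mathcal G^k.
\]
Since $-\mathcal G^k$ is equal in law to $\mathcal G^k$, this has the same distribution as the backward update~\eqref{eq:flucdissconstjarzbck1} with an i.i.d.\ Gaussian $\mathcal G^{\mathrm b,k}$; the same argument applies to~\eqref{eq:flucdissconstjarz2}. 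Chaining the three blocks and iterating in $k$ yields the claim. The main obstacles to watch out for are purely bookkeeping: making sure the schedule indices invert correctly via~\eqref{eq:rev_positions}--\eqref{eq:rev_velocities}, that the Lagrange multipliers are consistently renamed under the flip, and that the Gaussian increments are correctly re-identified across the forward and backward noise sequences so that the equality holds in law on the full trajectory space.
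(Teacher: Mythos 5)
Your proof is correct and follows essentially the same route as the paper: match the initial laws using the $p\mapsto -p$ symmetry of the constrained Boltzmann--Gibbs measure combined with the schedule reversibility \eqref{eq:rev_positions}--\eqref{eq:rev_velocities}, then verify block by block that the momentum-flipped forward update solves the backward RATTLE equations in law. You merely spell out the step-by-step algebra (including the sign flip of the Lagrange multipliers and the equality in law of $\pm\mathcal G^k$) that the paper compresses into the phrase ``one can easily verify.''
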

\begin{proof}
    The initial distribution of the forward process satisfies  
    $\mu_{\Sigma_{\xi,v_\xi}\left(z^{{( \widetilde z, z)}}(t_{0}), v_z^{{( \widetilde z, z)}}(t_{0})\right)} = \mu_{\Sigma_{\xi,v_\xi}\left(z^{(z,  \widetilde z)}(t_{K_T}), -v_z^{(z, \widetilde z)}(t_{K_T})\right)}$ by the symmetry properties \eqref{eq:rev_positions}-\eqref{eq:rev_velocities} of the schedule. Since \[(q,p) \in \Sigma_{\xi,v_\xi}\left(z, -v_z\right) \Leftrightarrow (q,-p) \in \Sigma_{\xi,v_\xi}\left(z, v_z\right),\]
    and from the definition of the constrained measure in \eqref{eq:constrainedBoltzmann} with the symmetry of the Hamiltonian $H(q,p) = H(q,-p)$, this implies that $(q^{\mathrm{b},0}, p^{\mathrm{b},0})$ and $(q^{0}, -p^{0})$ are indeed drawn from the same distribution. From there, using again the symmetry of the schedule, one can easily verify that the equations describing the forward dynamics with endpoints $(\widetilde z,z)$ for $(q^k,-p^k )$ are exactly the same as the equations describing the backward dynamics with end points~$(z, \widetilde z)$ for $(q^{\mathrm{b},k}, p^{\mathrm{b},k})$.
\end{proof}
\subsubsection{Symplecticity and reversibility of the deterministic flows}
\label{subsubsec:symplecticityflow}
\begin{lem}
\label{lem:flow_symplectic}
Consider the numerical flow 
    \begin{equation}
\Phi^k \ : \ 
\left\{ \begin{array}{ccc}
  \dps \manq_{\xi,v_\xi}\left(z(t_k), v_z(t_k)\right) 
  & \to & \dps \manq_{\xi,v_\xi}\left(z(t_{k+1}), v_z(t_{k+1})\right) \\[10pt]
  (q^k,p^{k+1/4}) & \mapsto & (q^{k+1},p^{k+3/4}) 
\end{array}\right.
\label{eq:fwd_flow}
\end{equation} associated with the deterministic part of the forward dynamics \eqref{eq:Verletconstswitched} as well as the flow
\begin{equation}
\Phi^{{\rm b},k'} \ : \ 
\left\{ \begin{array}{ccc}
  \dps \manq_{\xi,v_\xi}\left(z(t_{K_T-k'}), v_z(t_{K_T-k'})\right) 
  & \to & \dps \manq_{\xi,v_\xi}\left(z(t_{K_T-k'-1}), v_z(t_{K_T-k'-1})\right) \\[10pt]
  (q^{{\rm b},k'}, p^{{\rm b},k'+1/4}) & \mapsto & (q^{{\rm b},k'+1}, p^{{\rm b},k'+3/4}) 
\end{array}\right.
\label{eq:bckw_flow}
\end{equation}
associated with the deterministic part of the  backward dynamics \eqref{eq:Verletconstswitchedbck}. 
    
The flow $\Phi^k$ is a symplectic map for all $k\in \{0,\ldots,K_T-1\}$, and the same holds true for $\Phi^{{\rm b},k'}$ for all $k'\in \{0,\ldots,K_T-1\}$.
\end{lem}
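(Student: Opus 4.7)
The approach is to recognize $\Phi^k$ as a RATTLE integrator step for a Hamiltonian system subjected to the time-dependent constraint $\xi(q)=z(t_k)$ together with its hidden momentum counterpart $v_\xi(q,p)=v_z(t_k)$, and to establish symplecticity via the classical discrete variational / generating-function argument. The extension from the time-independent setting treated in Ref.~\onlinecite{lelievre_langevin_2012} to our schedule-dependent setup is essentially immediate, because the canonical two-form $\mathrm{d}q \wedge \mathrm{d}p$ is defined intrinsically on each manifold $\Sigma_{\xi,v_\xi}(z(t_k),v_z(t_k))$ and $\Phi^k$ simply maps between two such manifolds.

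Concretely, I would first use $q^{k+1} = q^k + \Delta t\, M^{-1} p^{k+1/2}$ to eliminate the intermediate momentum, so that $p^{k+1/2}=M(q^{k+1}-q^k)/\Delta t$. Substituting into the two momentum half-kicks of \eqref{eq:Verletconstswitched} yields
\begin{align*}
p^{k+1/4} &= -\partial_{q^k} S^k(q^k,q^{k+1}) - \nabla\xi(q^k)\,\lambda^{k+1/2},\\
p^{k+3/4} &= \phantom{-}\partial_{q^{k+1}} S^k(q^k,q^{k+1}) + \nabla\xi(q^{k+1})\,\lambda^{k+3/4},
\end{align*}
with the discrete Lagrangian
\begin{align*}
S^k(q,q') = \frac{1}{2\Delta t}(q'-q)^\top M (q'-q) - \frac{\Delta t}{2}\bigl[\widetilde V(q) + \widetilde V(q')\bigr].
\end{align*}
Together with the level-set constraints $\xi(q^k)=z(t_k)$, $\xi(q^{k+1})=z(t_{k+1})$ and the hidden constraint fixing $\lambda^{k+3/4}$, this is precisely the standard SHAKE/RATTLE generating-function representation. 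Exterior differentiation of these relations, combined with the identities $\nabla\xi(q^k)^\top \mathrm{d}q^k = 0$ on tangent vectors of $\Sigma(z(t_k))$ (and analogously at step $k+1$), shows that the Lagrange-multiplier terms disappear on the restricted tangent spaces, so that $\mathrm{d}p^{k+3/4}\wedge \mathrm{d}q^{k+1}$ and $\mathrm{d}p^{k+1/4}\wedge \mathrm{d}q^k$ restrict to equal two-forms on $\Sigma_{\xi,v_\xi}(z(t_{k+1}),v_z(t_{k+1}))$ and $\Sigma_{\xi,v_\xi}(z(t_k),v_z(t_k))$ under $\Phi^k$. This is symplecticity.

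For the backward flow, I would observe that \eqref{eq:Verletconstswitchedbck} is obtained from \eqref{eq:Verletconstswitched} by the formal substitution $\Delta t \mapsto -\Delta t$, equivalently by conjugating the forward step with momentum reversal $R:(q,p)\mapsto(q,-p)$ and running along the time-reversed schedule. The symmetries \eqref{eq:rev_positions}--\eqref{eq:rev_velocities} identify this time-reversed schedule with the original schedule read backward, which yields a relation of the form $\Phi^{\mathrm{b},k'} = R\circ (\Phi^{K_T-k'-1})^{-1}\circ R$. Since $R$ is anti-symplectic and the inverse of a symplectic map is symplectic, this composition is symplectic and the claim for $\Phi^{\mathrm{b},k'}$ follows.

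The main technical obstacle I anticipate is a rigorous treatment of the implicitly defined multiplier $\lambda^{k+1/2}$ associated with the nonlinear constraint $\xi(q^{k+1})=z(t_{k+1})$: one must verify, using the standing assumption that the Gram tensor $G_M$ is invertible on the level sets $\Sigma(z)$ together with an implicit function theorem argument valid for $\Delta t$ sufficiently small, that $\lambda^{k+1/2}$ is a smooth function of $(q^k,p^{k+1/4})$, so that $\Phi^k$ is a well-defined smooth diffeomorphism between the two constraint submanifolds. Once this regularity is secured, the generating-function computation sketched above is entirely mechanical.
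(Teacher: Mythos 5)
Your generating-function argument for the forward map $\Phi^k$ is correct and equivalent to the paper's direct approach. The paper instead takes exterior derivatives of the update equations \eqref{eq:Verletconstswitched} directly and shows $\mathrm{d}q^{k+1}\wedge\mathrm{d}p^{k+3/4}=\mathrm{d}q^{k}\wedge\mathrm{d}p^{k+1/4}$ term-by-term, killing the Lagrange-multiplier contributions via the tangency conditions $\nabla\xi(q^k)^\top\mathrm{d}q^k=0$ and $\nabla\xi(q^{k+1})^\top\mathrm{d}q^{k+1}=0$ together with two elementary wedge-product lemmas (symmetric matrices and transposition). Your route repackages the same cancellations through the discrete Lagrangian $S^k$, in the spirit of Leimkuhler--Skeel, which the paper cites as the constant-constraint precursor; the content is the same, and you correctly note that the key point is that the constraint surfaces are level sets so tangent differentials annihilate $\nabla\xi$, even though the level changes with $k$.

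Your treatment of the backward flow, however, is on weaker footing. The relation you claim is $\Phi^{\mathrm{b},k'} = R\circ(\Phi^{K_T-k'-1})^{-1}\circ R$, but a direct comparison of \eqref{eq:Verletconstswitched} and \eqref{eq:Verletconstswitchedbck} (both with the \emph{same} schedule, which is what the lemma concerns) shows that, under the reversibility/uniqueness of the implicit solve, one has simply $\Phi^{\mathrm{b},k'} = (\Phi^{K_T-k'-1})^{-1}$ with no momentum flips: the constraint $\xi(q^{\mathrm{b},k'+1})=z(t_{K_T-k'-1})$ matches $\xi(q^{k})=z(t_k)$ for $k=K_T-k'-1$, the Lagrange multipliers adjust signs, and the sign flips of $\dt$ in the backward scheme produce exactly the inverse of the forward step. (The momentum-flip picture you have in mind is the one used in \cref{lem:forw_backw_same_law}, which relates the backward process to the forward process run on the \emph{swapped-endpoint} schedule — a different statement.) More importantly, invoking $\Phi^{\mathrm{b},k'}=(\Phi^{K_T-k'-1})^{-1}$ (in whatever form) presupposes the pointwise reversibility property which the paper only states as an assumption \emph{after} \cref{lem:flow_symplectic}, and whose role is precisely to justify the Jarzynski--Crooks bookkeeping, not to establish symplecticity. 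The paper avoids this circularity by simply running the identical wedge-product computation on \eqref{eq:Verletconstswitchedbck}; the change $\dt\mapsto-\dt$ is irrelevant to the cancellations, so the backward flow preserves the two-form by the same argument. You should do the same for the backward step rather than relying on an inverse-map identity that is both mis-stated and out of logical order.

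The concern you raise at the end about the well-definedness and smoothness of the implicit multiplier $\lambda^{k+1/2}$ is apt and matches the paper's treatment: the paper assumes $\dt$ small enough that the forward and backward flows are well defined along the trajectory (see \cref{rem:existence_lagrange_multip} and the discussion referencing the reversibility checks of Hairer--Lubich--Wanner and Lemma~2.9 of Ref.~\onlinecite{lelievre-rousset-stoltz-19}).
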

To prove this statement, we show that the 2-form $\mathrm{d}q \wedge \mathrm{d} p$ is preserved, where $\wedge$ denotes the wedge-product (see Ref.~\onlinecite{McDuffSalamonSymplecticTop} for a textbook introduction to symplectic structures). For the simplest version with a constant constraint, this is a standard result found for example in Refs.~\onlinecite{hairer-lubich-wanner-06} and \onlinecite{leimkuhler_symplectic_1994}. 
    We follow here the same arguments as in Ref.~\onlinecite{leimkuhler_symplectic_1994}. The differentials obey:
\begin{align}
    \mathrm{d}  p^{k+1/2} &= \mathrm{d} p^{k+1/4} - \frac{\dt}{2} \mathrm{d} (\nabla V (q^{k})) + \mathrm{d} (\nabla \xi(q^k) \lambda^{k+1/2}) &\\
    \mathrm{d} q^{k+1} &= \mathrm{d} q^{k} + \dt \ M^{-1} \mathrm{d} p^{k+1/2}  \label{eq:diff_qkplus1}&\\
    \nabla\xi(q^{k+1})^\top \mathrm{d} q^{k+1}  &= 0& \label{eq:cq_diff}\\
    \mathrm{d} p^{k+3/4} &= \mathrm{d} p^{k+1/2} - \frac{\dt}{2} \mathrm{d} (\nabla V (q^{k+1})) +\mathrm{d} (\nabla \xi(q^{k+1}) \lambda^{k+3/4}).& 
\end{align}
Denoting the Hessian of $V$ by $\nabla^2 V$, we have $\mathrm{d} (\nabla V (q^{k})) = \nabla^2 V(q^k) \mathrm{d} q^k$. Taking the wedge product at the endpoints, we have
\begin{align}
    \mathrm{d} q^{k+1} \wedge \mathrm{d} p^{k+3/4} &= \mathrm{d} q^{k+1} \wedge \left( \mathrm{d} p^{k+1/2} - \frac{\dt}{2} \mathrm{d} (\nabla V (q^{k+1})) +\mathrm{d} (\nabla \xi(q^{k+1}) \lambda^{k+3/4})\right) \notag \\
    &= \mathrm{d} q^{k+1} \wedge  \mathrm{d} p^{k+1/2} - \frac{\dt}{2} \mathrm{d} q^{k+1} \wedge \nabla^2 V(q^{k+1}) \mathrm{d} q^{k+1} +\mathrm{d} q^{k+1} \wedge \mathrm{d} (\nabla \xi(q^{k+1}) \lambda^{k+3/4}) \label{eq:firstwedge}
\end{align}
As in the proof by Ref.~\onlinecite{leimkuhler_symplectic_1994}, we make use of the following two lemmata.
\begin{lem}
    Let $\mathrm{d} u$ be an arbitrary differential in $\mathbb{R}^n$ and let $A$ be any $n\times n$ real symmetric matrix. Then \[\mathrm{d} u \wedge (A\mathrm{d} u) = 0.\]
    \label{lem:symm_wedge}
\end{lem}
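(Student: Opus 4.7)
The plan is to expand the wedge product in a coordinate basis and exploit the antisymmetry of $\wedge$ together with the symmetry of $A$. Writing $\mathrm{d}u = (\mathrm{d}u_1,\ldots,\mathrm{d}u_n)^\top$, the $i$-th component of $A\,\mathrm{d}u$ is $\sum_j A_{ij}\,\mathrm{d}u_j$. Interpreting $\mathrm{d}u \wedge (A\,\mathrm{d}u)$ as the scalar 2-form
\begin{equation*}
\mathrm{d}u \wedge (A\,\mathrm{d}u) = \sum_{i,j=1}^n A_{ij}\,\mathrm{d}u_i \wedge \mathrm{d}u_j,
\end{equation*}
I would first split the sum according to the pairs $\{i,j\}$ with $i<j$, $i=j$, and $i>j$.

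The diagonal terms vanish because $\mathrm{d}u_i \wedge \mathrm{d}u_i = 0$. For the off-diagonal terms, pair up $(i,j)$ with $(j,i)$ and use the antisymmetry $\mathrm{d}u_j \wedge \mathrm{d}u_i = -\mathrm{d}u_i \wedge \mathrm{d}u_j$ to obtain
\begin{equation*}
\mathrm{d}u \wedge (A\,\mathrm{d}u) = \sum_{i<j} (A_{ij} - A_{ji})\,\mathrm{d}u_i \wedge \mathrm{d}u_j.
\end{equation*}
The conclusion then follows immediately from $A_{ij} = A_{ji}$, which holds by the assumed symmetry of $A$.

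There is essentially no obstacle here: the argument is a direct consequence of the two relevant symmetries (antisymmetry of $\wedge$ and symmetry of $A$) clashing against each other. The only minor point of care is to interpret $\mathrm{d}u \wedge (A\,\mathrm{d}u)$ unambiguously as the sum of pairwise wedges of the component 1-forms, which is the natural convention when $\mathrm{d}u$ is a vector-valued 1-form.
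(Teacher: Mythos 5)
Your proof is correct and takes essentially the same approach as the paper: expand component-wise and play the symmetry of $A$ against the antisymmetry of the wedge product. The only cosmetic difference is that the paper obtains the conclusion by showing the expression equals its own negative (via index relabeling) rather than by explicitly pairing off-diagonal terms; the underlying idea is identical.
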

\begin{proof}
    Writing this expression component-wise, we have
    \[
    \mathrm{d} u \wedge (A\mathrm{d} u) = \sum_{i,j=1}^n \mathrm{d} u_i \wedge (A_{ij}\mathrm{d} u_j) =  \sum_{i,j=1}^n A_{ij}\mathrm{d} u_i \wedge \mathrm{d} u_j = - \sum_{i,j=1}^n A_{ji} \mathrm{d} u_j \wedge \mathrm{d} u_i = -\mathrm{d} u \wedge (A\mathrm{d} u)
    \]
    by the symmetry of $A$ and the antisymmetry of the wedge-product, and thus $\mathrm{d} u \wedge (A\mathrm{d} u)=0$.
\end{proof}
\begin{lem}
    Let $\mathrm{d} u$ and $\mathrm{d} v$ be arbitrary differentials in $\mathbb{R}^n$ and $\mathbb{R}^m$, respectively, and let $B$ be any $n\times m$ real matrix, then \[\mathrm{d} u \wedge (B\mathrm{d} v) = (B^\top \mathrm{d} u) \wedge \mathrm{d} v.\]
    \label{lem:skewsymm_wedge}
\end{lem}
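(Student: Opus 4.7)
The plan is to follow exactly the same strategy as in the preceding Lemma~\ref{lem:symm_wedge}: expand both sides componentwise using the bilinearity of the wedge product and observe that the resulting double sums match term by term. Unlike Lemma~\ref{lem:symm_wedge}, no antisymmetry argument is required here; the only ingredients are bilinearity of $\wedge$ and the definition $(B^\top)_{ji}=B_{ij}$.

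Concretely, I would first write the coordinate expansion $(B\mathrm{d} v)_i = \sum_{j=1}^m B_{ij}\,\mathrm{d} v_j$ and then use bilinearity of $\wedge$ to pull out the scalars $B_{ij}$:
\[
\mathrm{d} u \wedge (B\mathrm{d} v) = \sum_{i=1}^n \mathrm{d} u_i \wedge (B\mathrm{d} v)_i = \sum_{i=1}^n \sum_{j=1}^m B_{ij}\,\mathrm{d} u_i \wedge \mathrm{d} v_j.
\]
On the other side, $(B^\top \mathrm{d} u)_j = \sum_{i=1}^n (B^\top)_{ji}\,\mathrm{d} u_i = \sum_{i=1}^n B_{ij}\,\mathrm{d} u_i$, so the same bilinearity yields
\[
(B^\top \mathrm{d} u) \wedge \mathrm{d} v = \sum_{j=1}^m (B^\top \mathrm{d} u)_j \wedge \mathrm{d} v_j = \sum_{i=1}^n \sum_{j=1}^m B_{ij}\,\mathrm{d} u_i \wedge \mathrm{d} v_j.
\]
The two right-hand sides are identical, which proves the claim.

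There is no genuine obstacle here; the statement is a bookkeeping identity expressing that moving a matrix factor across a wedge product costs only a transpose. Its purpose in the sequel of Lemma~\ref{lem:flow_symplectic} is precisely to allow manipulation of terms such as $\mathrm{d} q^{k+1}\wedge \mathrm{d}(\nabla\xi(q^{k+1})\lambda^{k+3/4})$ by transferring the matrix $\nabla\xi(q^{k+1})$ onto the $\mathrm{d} q^{k+1}$ factor, so that the constraint differential $\nabla\xi(q^{k+1})^\top \mathrm{d} q^{k+1} = 0$ (from equation~\eqref{eq:cq_diff}) can be used to eliminate the unwanted constraint-multiplier contributions.
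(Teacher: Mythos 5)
Your proof is correct and follows essentially the same componentwise bilinearity argument as the paper; the only cosmetic difference is that you expand both sides to a common double sum, whereas the paper chains equalities from left to right. Your closing remark correctly identifies the lemma's role in eliminating the Lagrange-multiplier terms via the constraint differential $\nabla\xi(q^{k+1})^\top\mathrm{d} q^{k+1}=0$.
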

\begin{proof}
    Writing this expression component-wise, we have
    \[
    \mathrm{d} u \wedge (B\mathrm{d} v) = \sum_{i=1}^n \sum_{j=1}^m \mathrm{d} u_i \wedge (B_{ij}\mathrm{d} v_j) = \sum_{i=1}^n \sum_{j=1}^m (B_{ij} \mathrm{d} u_i) \wedge \mathrm{d} v_j = \sum_{i=1}^n \sum_{j=1}^m (B^\top_{ji} \mathrm{d} u_i) \wedge \mathrm{d} v_j = (B^\top \mathrm{d} u) \wedge \mathrm{d} v,
    \]
    which is the desired result. 
\end{proof}
Since the Hessian is symmetric, the second term in \eqref{eq:firstwedge} vanishes by \cref{lem:symm_wedge}. The third term in \eqref{eq:firstwedge} also vanishes:
\begin{align*}
    \mathrm{d} q^{k+1} \wedge \mathrm{d} (\nabla \xi(q^{k+1}) \lambda^{k+3/4}) = \mathrm{d} q^{k+1} \wedge \nabla \xi(q^{k+1}) \mathrm{d} \lambda^{k+3/4} + \sum_{i=1}^{\ell} \lambda^{k+3/4}_i  \mathrm{d} q^{k+1} \wedge (\nabla^2\xi_i)(q^{k+1}) \mathrm{d} q^{k+1} = 0,
\end{align*}
where $\xi_i$ is the $i$-th component of the CV. Indeed, thanks to \cref{lem:skewsymm_wedge}, we have $\mathrm{d} q^{k+1} \wedge \nabla \xi(q^{k+1}) \mathrm{d} \lambda^{k+3/4} = (\nabla \xi(q^{k+1}))^\top \mathrm{d} q^{k+1} \wedge  \mathrm{d} \lambda^{k+3/4}$ which is zero due to \eqref{eq:cq_diff}. The final term equally vanishes since the Hessian is symmetric and one can thus apply \cref{lem:symm_wedge} to each term of the sum. 

We therefore have that
\begin{equation*}
    \mathrm{d} q^{k+1} \wedge \mathrm{d} p^{k+3/4} = \mathrm{d} q^{k+1} \wedge  \mathrm{d} p^{k+1/2}.
\end{equation*}
Using \eqref{eq:diff_qkplus1}, we can continue (with the same reasoning as before)
\begin{align*}
    \mathrm{d} q^{k+1} \wedge  \mathrm{d} p^{k+1/2} &= \mathrm{d} q^{k} \wedge  \mathrm{d} p^{k+1/2} + \dt \ M^{-1} \mathrm{d} p^{k+1/2} \wedge  \mathrm{d} p^{k+1/2} \\
    &= \mathrm{d} q^{k} \wedge  \left(\mathrm{d} p^{k+1/4} - \frac{\dt}{2} \mathrm{d} (\nabla V (q^{k})) + \mathrm{d} (\nabla \xi(q^k) \lambda^{k+1/2})\right) \\
    &= \mathrm{d} q^{k} \wedge \mathrm{d} p^{k+1/4} - \frac{\dt}{2} \mathrm{d} q^{k} \wedge \mathrm{d} (\nabla V (q^{k})) + \mathrm{d} q^{k} \wedge \mathrm{d} (\nabla \xi(q^k) \lambda^{k+1/2}) \\
    &= \mathrm{d} q^{k} \wedge \mathrm{d} p^{k+1/4} - \frac{\dt}{2} \mathrm{d} q^{k} \wedge \nabla^2V(q^k) \mathrm{d} q^k + \mathrm{d} q^{k} \wedge \left(\nabla \xi(q^k) \mathrm{d} \lambda^{k+1/2} + \sum_{i=1}^\ell \lambda_i^{k+1/2} (\nabla^2\xi_i)(q^k) \mathrm{d} q^k \right) \\
    &= \mathrm{d} q^{k} \wedge \mathrm{d} p^{k+1/4} - \frac{\dt}{2} \mathrm{d} q^{k} \wedge \nabla^2V(q^k) \mathrm{d} q^k + (\nabla \xi(q^k))^\top \mathrm{d} q^{k} \wedge  \mathrm{d} \lambda^{k+1/2} + \sum_{i=1}^\ell \lambda_i^{k+1/2}  \mathrm{d} q^{k} \wedge (\nabla^2\xi_i)(q^k) \mathrm{d} q^k \\
    &= \mathrm{d} q^{k} \wedge \mathrm{d} p^{k+1/4}.
\end{align*}
Following exactly the same arguments, one can equally show that $\mathrm{d}q^{{\rm b},k'}\wedge  \mathrm{d} p^{{\rm b},k'+1/4} = \mathrm{d} q^{{\rm b},k'+1} \wedge \mathrm{d} p^{{\rm b},k'+3/4}$. This concludes the proof of \cref{lem:flow_symplectic}.

We now make more precise the reversibility property we require between the forward and backward flow, already mentioned in \cref{rem:existence_lagrange_multip}. In all of the following, we assume that the time-step $\Delta t$ is sufficiently small so that, along the trajectory, the forward and backward flows $\Phi^k$ and $\Phi^{{\rm b}, k'}$ introduced in \eqref{eq:fwd_flow} and \eqref{eq:bckw_flow} and associated with the deterministic parts~\eqref{eq:Verletconstswitched} and~\eqref{eq:Verletconstswitchedbck} satisfy the following property: For all the configurations which are visited and all $k\in \{1,\ldots,K_T\}$ the following pointwise equality holds:
\begin{align}
    \Phi^{{\rm b}, K_T-k} \circ \Phi^{k-1} = \operatorname{Id}.
    \label{eq:rev_forwbackw_flows}
\end{align}
See e.g. Sect.~VII.1.4 of Ref.~\onlinecite{hairer-lubich-wanner-06} and Sect.~5.4.3 of Ref.~\onlinecite{lelievre_langevin_2012}, as well as Lemma~2.9 in Ref.~\onlinecite{lelievre-rousset-stoltz-19} where a precise setting for which this is satisfied is exhibited. Potential difficulties around this could also be dealt with using so-called reversibility checks, but we leave this for future work.

\subsubsection{Jarzynski--Crooks equality}
We are now in position to state the Jarzynski--Crooks formula. We state it here in a slightly more general form than in Ref.~\onlinecite{lelievre_langevin_2012}, but otherwise follow exactly the presentation there.
\begin{theorem}
\label{theo:JarzCrooks}
Consider the distribution $\mu_{\manq_{\xi,v_\xi}(z(t),v_z(t))}$ 
  and its normalizing constant $\mathcal{Z}_{z(t),v_z(t)}$ defined in~\eqref{eq:constrainedBoltzmann}.
  Denote by $\left((q^k,p^k, p^{k+1/4}, p^{k+3/4})_{0\leq k\leq K_T-1}, q^{K_T}, p^{K_T}\right)$ the solution of the forward discretized Langevin dynamics~\eqref{eq:flucdissconstjarz1}-\eqref{eq:flucdissconstjarz2} with initial conditions
  distributed according to
  \begin{equation}
  \label{eq:ICf_discrete}
  (q^0,p^0) \sim \mu_{\manq_{\xi,v_\xi} \pare{ z(t_0), v_z(t_0)} }(\mathrm{d}q\, \mathrm{d}p ),
  \end{equation}
  and by $\left((q^{\mathrm{b},k},p^{\mathrm{b},k}, p^{\mathrm{b},k+1/4}, p^{\mathrm{b},k+3/4})_{0\leq k \leq K_T-1}, q^{\mathrm{b},K_T}, p^{\mathrm{b},K_T}\right)$ the solution of the discretized 
  backward Langevin dynamics~\eqref{eq:flucdissconstjarzbck1}-\eqref{eq:flucdissconstjarzbck2} with initial conditions distributed according to
  \begin{equation}\label{eq:ICb_discrete}
  (q^{ { \rm b},0},p^{ {\rm b},0}) \sim \mu_{\manq_{\xi,v_\xi} \pare{ z(t_{K_T}), v_z(t_{K_T})  } }(\mathrm{d}q\, \mathrm{d}p ).
  \end{equation}
  Then, the following Jarzynski--Crooks identity holds: for any bounded 
  discrete path functional $\ph_{[0,K_T]}$, 
\begin{equation}
\label{eq:Jarz_Crooks_generalized}
\frac{\mathcal{Z}_{z(K_T), v_z(t_{K_T})   }}{\mathcal{Z}_{z(t_0), v_z(t_{0})  }  } 
    = \frac{\E \pare{ \ph_{[0,K_T]}\left((q^k,p^k, p^{k+1/4}, p^{k+3/4})_{0 \leq k \leq K_T-1}, q^{K_T}, p^{K_T}\right) \,  
      {\rm e}^{-\beta\W^{K_T} } }}{\E \pare{ \ph^{\rm r}_{[0,K_T]}\left((q^{{\rm b},k},p^{{\rm b},k}, p^{{\rm b},k+1/4}, p^{{\rm b},k+3/4})_{0 \leq k \leq K_T-1}, q^{{\rm b},K_T}, p^{{\rm b},K_T}\right) }},
\end{equation}
where $\W^{K_T}$ is computed according to~\eqref{eq:work_definition} with $\widetilde{H}=H$, and 
  $(\, \cdot \,)^{\rm r}$ denotes the composition with 
  the operation of time reversal of paths:
  \begin{align}
    \label{eq:time_reversal_on_paths_discrete}
    \ph_{[0,K_T]}^{\rm r}&\left((q^{\mathrm{b}, k}, p^{\mathrm{b},k}, p^{\mathrm{b}, k+1/4}, p^{\mathrm{b}, k+3/4})_{0\leq k\leq K_T-1}, q^{\mathrm{b}, K_T}, p^{\mathrm{b}, K_T}\right) \notag  \\
    &= \ph_{[0,K_T]}\left((q^{\mathrm{b},K_T-k},p^{\mathrm{b},K_T-k}, p^{\mathrm{b},K_T-k-1/4}, p^{\mathrm{b},K_T-k-3/4})_{0\leq k\leq K_T-1}, q^{\mathrm{b},0}, p^{\mathrm{b},0}\right).
  \end{align} 
\end{theorem}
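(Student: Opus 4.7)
The strategy is to compute the Radon--Nikodym derivative $\mathrm{d}\mathbb{P}_{\mathrm{fwd}}/\mathrm{d}(\mathbb{P}_{\mathrm{bwd}}\circ R^{-1})$ of the forward joint law with respect to the pushforward of the backward joint law by the time-reversal map $R$ on the natural path space, show that this derivative equals $(\mathcal{Z}_{z(t_{K_T}),v_z(t_{K_T})}/\mathcal{Z}_{z(t_0),v_z(t_0)})\,\mathrm{e}^{\beta \mathcal{W}^{K_T}}$, and then obtain \eqref{eq:Jarz_Crooks_generalized} by integrating $\phi_{[0,K_T]}\,\mathrm{e}^{-\beta \mathcal{W}^{K_T}}$ against $\mathrm{d}\mathbb{P}_{\mathrm{fwd}}$ and performing the change of variables induced by $R$. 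A trajectory $\omega = ((q^k,p^k,p^{k+1/4},p^{k+3/4})_{0\leq k\leq K_T-1},q^{K_T},p^{K_T})$ lives on a product of constraint manifolds $\Sigma_{\xi,v_\xi}(z(t_k),v_z(t_k))$ at integer indices together with the affine cotangent spaces $\Sigma_{v_\xi(q,\cdot)}(v_z)$ for the quarter-step momenta, equipped with the reference measure built from the Liouville surface measures via the decomposition \eqref{eq:surfacemeas}. Since $\gamma_P,\sigma_P$ and the schedule values at matching indices coincide between the forward dynamics \eqref{eq:flucdissconstjarz1}--\eqref{eq:flucdissconstjarz2} and the backward dynamics \eqref{eq:flucdissconstjarzbck1}--\eqref{eq:flucdissconstjarzbck2}, the map $R$ preserves this reference measure.

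To compute the density ratio I would then analyze each substep separately. The Ornstein--Uhlenbeck half-steps can be inverted explicitly into Gaussian transition kernels on the affine cotangent spaces, and the fluctuation--dissipation identity \eqref{eq:fluct_dissip} ensures that they are reversible with respect to the restricted kinetic Gibbs density $\mathrm{e}^{-\beta p^\top M^{-1}p/2}$; hence the forward-over-reversed-backward contribution of one such substep with momentum change $p\mapsto p'$ at fixed $q$ equals $\mathrm{e}^{\beta(\|p\|^2_{M^{-1}}-\|p'\|^2_{M^{-1}})/2}$. The deterministic RATTLE substep \eqref{eq:Verletconstswitched} is symplectic by \cref{lem:flow_symplectic}, so $\Phi^k$ pushes the Liouville measure on $\Sigma_{\xi,v_\xi}(z(t_k),v_z(t_k))$ onto that on $\Sigma_{\xi,v_\xi}(z(t_{k+1}),v_z(t_{k+1}))$; combined with the assumed reversibility $\Phi^{\mathrm{b},K_T-1-k}\circ\Phi^k=\operatorname{Id}$, this substep contributes no multiplicative factor beyond pinning the Dirac supports of the forward and time-reversed backward laws to the same paths. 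Summing the kinetic-energy increments from the OU substeps telescopes to $H(q^{K_T},p^{K_T})-H(q^0,p^0)-\mathcal{W}^{K_T}$, because the intermediate Hamiltonian jumps at the deterministic substeps are exactly the work increments of \eqref{eq:work_definition} (with $\widetilde{H}=H$). Combining this with the ratio $(\mathcal{Z}_{z(t_{K_T}),v_z(t_{K_T})}/\mathcal{Z}_{z(t_0),v_z(t_0)})\,\mathrm{e}^{\beta(H(q^{K_T},p^{K_T})-H(q^0,p^0))}$ of the initial Gibbs weights produces the claimed ratio, concluding the proof.

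The main obstacle is the measure-theoretic handling of the constraint manifolds: since $\Phi^k$ is not a self-map but sends distinct symplectic leaves to one another, converting the symplecticity of \cref{lem:flow_symplectic} into the preservation of the Liouville surface measure across leaves requires identifying the latter with an appropriate power of the restricted symplectic form on each leaf $\Sigma_{\xi,v_\xi}(z,v_z)$. A secondary technical point is the explicit inversion of the implicit OU updates \eqref{eq:flucdissconstjarz1}--\eqref{eq:flucdissconstjarz2}, which amounts to checking invertibility of $\operatorname{Id}+\tfrac{\Delta t}{4}\gamma_P(q)M^{-1}$ on the affine cotangent subspace and accounting for the resulting Jacobian inside the detailed-balance identity.
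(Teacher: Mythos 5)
Your proposal correctly reconstructs the proof strategy of Ref.~\onlinecite[Theorem 5.5]{lelievre_langevin_2012}, which is precisely what the paper's own proof defers to: the paper simply points to that reference and notes that the only new ingredient required (because the momentum constraint $(C_p)$ here differs between forward and backward dynamics) is the symplecticity of the deterministic constrained flows, established in \cref{lem:flow_symplectic}. You have identified exactly that point as the crucial measure-preservation step across level sets, and your detailed-balance accounting for the Ornstein–Uhlenbeck half-steps together with the telescoping of kinetic-energy increments against the work \eqref{eq:work_definition} reproduces the Radon–Nikodym computation of the reference proof.
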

\begin{proof}
    The proof is the same as the proof of Ref.~\onlinecite[Theorem 5.5]{lelievre_langevin_2012} and relies in particular on the reversibility property of the forward and backward dynamics \eqref{eq:rev_forwbackw_flows}. The only difference lies in the slightly different momentum constraint $(C_p)$ for the forward and backward dynamics. Since we show in \cref{lem:flow_symplectic} that the forward and backward deterministic flows are symplectic, the proof goes through in the same way.
\end{proof}

\subsection{Proof of reversibility}\label{sec:appendix_reversibility3}
\label{sec:proof_reversibility}

We are now in a position to prove \cref{theo:reversbility}, namely that the algorithm presented in \cref{subsec:algorithm} is reversible with respect to the target probability measure $\nu$.
Assume that $Q_n$ is distributed according to $\nu$. The aim is to prove that $(Q_n,Q_{n+1})$ has the same law as $(Q_{n+1},Q_n)$. In order to study the law of $(Q_n,Q_{n+1})$, let us consider a bounded measurable function~$\varphi:\R^d \times \R^d \to \R$.
Recall that $Z_n = \xi(Q_n)$ and $\widetilde{Z}_{n+1}= \xi(\widetilde{Q}_{n+1})$. One has:
\begin{align*}
\E(\varphi(Q_n,Q_{n+1}))
&=\E\left(\varphi(Q_n,\widetilde{Q}_{n+1}) \one_{U_{n+1}\le \exp(-\beta \mathcal{W}_{n+1})\frac{\rho(\widetilde{Z}_{n+1}, Z_n)}{\rho(Z_n,\widetilde{Z}_{n+1})}}\right)\\
&\quad + \E\left(\varphi(Q_n,Q_n)\one_{U_{n+1}> \exp(-\beta \mathcal{W}_{n+1})\frac{\rho(\widetilde{Z}_{n+1}, Z_n)}{\rho(Z_n,\widetilde{Z}_{n+1})}}\right) \\
&= \E\left(\varphi(Q_n,\widetilde{Q}_{n+1}) \left[1 \wedge \left( \exp(-\beta \mathcal{W}_{n+1})\frac{\rho(\widetilde{Z}_{n+1}, Z_n)}{\rho(Z_n,\widetilde{Z}_{n+1})}\right)\right]\right)\\
&\quad + \E\left(\varphi(Q_n, Q_n) \left[ 1-  1 \wedge \left(\exp(-\beta \mathcal{W}_{n+1})\frac{\rho(\widetilde{Z}_{n+1}, Z_n)}{\rho(Z_n,\widetilde{Z}_{n+1})}\right)\right]\right),
\end{align*}
where here and in the following, $a \wedge b = \min(a,b)$. 
The expectation value is taken over $Q_n\sim\nu$ (which fixes $Z_n$), and all the random variables used to generate $Q_{n+1}$: sampling $\widetilde{Z}_{n+1}$ from the density $\rho(Z_n, \cdot)$, following the steered dynamics to obtain $\widetilde{Q}_{n+1}$ and accepting or rejecting the final move. The expectation value over $Q_n$ can also be understood in two steps: 
\begin{itemize}
\item First, sample $Z_n\sim \xi\#\nu$, where 
$(\xi\#\nu)(\mathrm{d} z) = \exp(-\beta F(z)) \, \mathrm{d}z$
denotes the image of the probability measure $\nu$ by $\xi$ (see \eqref{eq:free_energy} for the definition of the free energy $F$). 

\item Second, sample a full state $Q_n$ from  $\nu(\mathrm{d}q \, | \, \xi(q)=Z_n)$, where $\nu(\mathrm{d}q \, | \, \xi(q)=z)$ denotes the the family of conditional measures
\[\nu(\mathrm{d}q \, | \, \xi(q)=z)=\frac{\mathcal{Z}_q^{-1}e^{-\beta V(q)} \delta_{\xi(q)-z}(\mathrm{d}q)}{\exp(-\beta F(z))},\]
where $\exp(-\beta F(z))$ is the normalizing constant, recall \eqref{eq:free_energy}.
\end{itemize}
By conditioning on the collective variables $(Z_n, \widetilde{Z}_{n+1})$, one gets
\begin{align}
  &\E\left(\varphi(Q_n,\widetilde{Q}_{n+1}) \left[1 \wedge \left( \exp(-\beta \mathcal{W}_{n+1})\frac{\rho(\widetilde{Z}_{n+1}, Z_n)}{\rho(Z_n,\widetilde{Z}_{n+1})}\right)\right]\right) \notag\\
  &=\int_{z, \widetilde z} \E\left(\varphi(Q_n,\widetilde{Q}_{n+1}) \left[ 1 \wedge \left( \exp(-\beta \mathcal{W}_{n+1})\frac{\rho(\xi(\widetilde{Q}_{n+1}), \xi(Q_n))}{\rho(\xi(Q_{n}),\xi(\widetilde{Q}_{n+1}))}\right)\right]\middle| (\xi(Q_n),\xi(\widetilde Q_{n+1}))=(z,\widetilde z) \right) \notag \\
  &\quad\quad  \times \exp(-\beta F(z)) \rho(z,\widetilde z) \, {\rm d}z \, {\rm d}\widetilde z \notag\\
  &=\int_{z, \widetilde z} \E\left(\varphi(Q_n,\widetilde{Q}_{n+1}) \left[ \rho(\xi(Q_{n}),\xi(\widetilde{Q}_{n+1})) \wedge \left( \exp(-\beta \mathcal{W}_{n+1}) \rho(\xi(\widetilde{Q}_{n+1}), \xi(Q_{n}))\right)\right]\middle| (\xi(Q_n),\xi(\widetilde Q_{n+1}))=(z,\widetilde z) \right) \notag \\
  &\quad\quad  \times \exp(-\beta F(z))  \, {\rm d}z \, {\rm d}\widetilde z.\notag
\end{align}

The conditional expectation value depends on the full steered path (and not just on the endpoints) via the work defined by \eqref{eq:work_definition}, which we can rewrite here as
\begin{equation}
\mathcal{W}\left((q^k, p^{k+1/4}, p^{k+3/4})_{0\leq k\leq K_T-1}, q^{K_T}\right) = \sum_{k=0}^{K_T-1} \left(\widetilde{H}(q^{k+1}, p^{k+3/4}) - \widetilde{H}(q^{k}, p^{k+1/4})\right).
\end{equation}
Notice that the work of a time-reversed path is the negative of the work of the path without time-reversal:
\begin{align}
    &\mathcal{W}\left((q^{K_T-k}, p^{K_T-k-1/4}, p^{K_T-k-3/4})_{0\leq k\leq K_T-1}, q^{0}\right) \notag\\
    &= \sum_{k=0}^{K_T-1}\left(\widetilde{H}(q^{K_T-k-1}, p^{K_T-k-3/4}) - \widetilde{H}(q^{K_T-k}, p^{K_T-k-1/4})\right) \notag\\
    &= \sum_{k=0}^{K_T-1}\left(\widetilde{H}(q^{k}, p^{k+1/4}) - \widetilde{H}(q^{k+1}, p^{k+3/4})\right) \notag\\
    &= - \mathcal{W}\left((q^{k}, p^{k+1/4}, p^{k+3/4})_{0\leq k\leq K_T-1}, q^{K_T}\right). \label{eq:work_rev_symmetry}
\end{align}
Using the generalized Jarzynski--Crooks identity, one gets, introducing the notation $\left((q^k, p^{k+1/4}, p^{k+3/4})_{0\leq k\leq K_T-1}, q^{K_T}\right)$ for the forward process and $\left((q^{\mathrm{b},k}, p^{\mathrm{b},k+1/4}, p^{\mathrm{b},k+3/4})_{0\leq k\leq K_T-1}, q^{\mathrm{b},K_T}\right)$ for the backward process as in \cref{theo:JarzCrooks},
\begin{align}
  &\E\Bigl(\varphi(Q_n,\widetilde{Q}_{n+1}) 
  \left[ \rho(\xi(Q_n),\xi(\widetilde{Q}_{n+1})) \wedge \left(\exp(-\beta \mathcal{W}_{n+1}) \rho(\xi(\widetilde{Q}_{n+1}), \xi(Q_{n}))\right)\right] 
  \Big|  (\xi(Q_n),\xi(\widetilde Q_{n+1}))=(z,\widetilde z) \Bigr)\notag\\
  &=\E\Bigl(\varphi(q^0,q^{K_T}) \Big[ \rho(\xi(q^0),\xi(q^{K_T})) \notag\\
  &\qquad \qquad \wedge \left( \exp\left(-\beta \mathcal{W}\left((q^k, p^{k+1/4}, p^{k+3/4})_{0\leq k\leq K_T-1}, q^{K_T}\right)\right) \rho(\xi(q^{K_T}),\xi(q^0))\right)\Big] \Big|  (\xi(q^0),\xi(q^{K_T}))=(z,\widetilde z)\Bigr)\notag\\
  &=\E\Bigl(\varphi(q^0,q^{K_T}) \Big[\left(\exp\left(\beta \mathcal{W}\left((q^k, p^{k+1/4}, p^{k+3/4})_{0\leq k\leq K_T-1}, q^{K_T}\right)\right)  \rho(\xi(q^0),\xi(q^{K_T})) \right)\notag\\
  &\qquad \qquad  \wedge  \rho(\xi(q^{K_T}), \xi(q^0))\Big] \exp\left(-\beta \mathcal{W}\left((q^k, p^{k+1/4}, p^{k+3/4})_{0\leq k\leq K_T-1}, q^{K_T}\right)\right)\Big|  (\xi(q^0),\xi(q^{K_T}))=(z,\widetilde z) \Bigr)\notag\\
   &\overset{\eqref{eq:Jarz_Crooks_generalized}}{=}\E\Bigl(\varphi(q^{\mathrm{b},K_T},q^{\mathrm{b},0}) \Big[\left(\exp\left(\beta \mathcal{W}\left((q^{\mathrm{b},K_T-k}, p^{\mathrm{b},K_T-k-1/4}, p^{\mathrm{b},K_T-k-3/4})_{0\leq k\leq K_T-1}, q^{\mathrm{b},0}\right)\right)  \rho(\xi(q^{\mathrm{b},K_T}),\xi(q^{\mathrm{b},0})) \right)  \notag\\
  &\qquad \qquad \wedge\, \rho(\xi(q^{\mathrm{b},0}), \xi(q^{\mathrm{b},K_T}))\Big] \Big|  (\xi(q^{\mathrm{b},0}), \xi(q^{\mathrm{b},K_T}))=(\widetilde z, z) \Bigr)\frac{\widetilde{\mathcal{Z}}(\widetilde z, v_z(t_{K_T}))}{\widetilde{\mathcal{Z}}(z, v_z(t_0))}\notag\\
   &\overset{\eqref{eq:free_energy_relations}}{=}\E\Bigl(\varphi(q^{\mathrm{b},K_T},q^{\mathrm{b},0}) \Big[\left(\exp\left(\beta \mathcal{W}\left((q^{\mathrm{b},K_T-k}, p^{\mathrm{b},K_T-k-1/4}, p^{\mathrm{b},K_T-k-3/4})_{0\leq k\leq K_T-1}, q^{\mathrm{b},0}\right)\right)  \rho(\xi(q^{\mathrm{b},K_T}),\xi(q^{\mathrm{b},0})) \right)\notag\\
  &\qquad \qquad \wedge\,  \rho(\xi(q^{\mathrm{b},0}), \xi(q^{\mathrm{b},K_T}))\Big]  \Big| (\xi(q^{\mathrm{b},0}), \xi(q^{\mathrm{b},K_T}))=(\widetilde z, z)\Bigr)\exp(-\beta (F(\widetilde z)-F(z)))\notag\\
    &\overset{\eqref{eq:work_rev_symmetry}}{=}\E\Bigl(\varphi(q^{\mathrm{b},K_T},q^{\mathrm{b},0}) \Big[\left(\exp\left(-\beta \mathcal{W}\left((q^{\mathrm{b},k}, p^{\mathrm{b},k+1/4}, p^{\mathrm{b},k+3/4})_{0\leq k\leq K_T-1}, q^{\mathrm{b},K_T}\right) \right)  \rho(\xi(q^{\mathrm{b},K_T}),\xi(q^{\mathrm{b},0})) \right)\notag\\
  &\qquad \qquad \wedge \, \rho(\xi(q^{\mathrm{b},0}), \xi(q^{\mathrm{b},K_T}))\Big]  \Big|  (\xi(q^{\mathrm{b},0}), \xi(q^{\mathrm{b},K_T}))=(\widetilde z, z)\Bigr)\exp(-\beta (F(\widetilde z)-F(z))).  \label{eq:proof_eq1}
\end{align}
In the last but one step, we used the assumption~\eqref{eq:zerovelocity_condition} that the initial and final velocity is zero, $v_z(t_0)=v_z(t_{K_T})=0$, and applied the identity \eqref{eq:free_energy_relations}.
For the final step, we used the symmetry property \eqref{eq:work_rev_symmetry} of the work function.
\begin{rem}
    As mentioned in~\cref{rem:free_energy_relations_nofixman}, when $G_M$ does not depend on $q$, there is no Fixman term to be considered and  \eqref{eq:free_energyrelation_nofixman} holds. One can use this equation instead of \eqref{eq:free_energy_relations} in the computation above, and this in turn implies that reversibility can be proven only assuming that the initial and final velocities are equal i.e. $v_z(t_0)=v_z(t_{K_T})$, but not necessarily zero (see condition~(\ref{eq:zerovelocity_condition})).
    \label{rem:proof_no_fixma_velocities}
\end{rem}

We can use \cref{lem:forw_backw_same_law} to rewrite the latter expectation value over a backward path as one over a forward path as
\begin{align}
 &\E\Bigl(\varphi(q^{\mathrm{b},K_T},q^{\mathrm{b},0}) \Big[\left(\exp\left(-\beta \mathcal{W}\left((q^{\mathrm{b},k}, p^{\mathrm{b},k+1/4}, p^{\mathrm{b},k+3/4})_{0\leq k\leq K_T-1}, q^{\mathrm{b},K_T}\right) \right)  \rho(\xi(q^{\mathrm{b},K_T}),\xi(q^{\mathrm{b},0})) \right) \notag\\
  &\qquad \qquad \wedge \,  \rho(\xi(q^{\mathrm{b},0}), \xi(q^{\mathrm{b},K_T}))\Big] \Big|  (\xi(q^{\mathrm{b},0}), \xi(q^{\mathrm{b},K_T}))=(\widetilde z, z) \Bigr)\exp(-\beta (F(\widetilde z)-F(z))) \notag\\
  &=  \E\Bigl(\varphi(q^{K_T},q^{0}) \Big[\left(\exp\left(-\beta \mathcal{W}\left((q^{k}, -p^{k+1/4}, -p^{k+3/4})_{0\leq k\leq K_T-1}, q^{K_T}\right) \right)  \rho(\xi(q^{K_T}),\xi(q^{0})) \right)\notag\\
  &\qquad \qquad \wedge \, \rho(\xi(q^{0}), \xi(q^{K_T}))\Big] \Big|  (\xi(q^{0}), \xi(q^{K_T}))=(\widetilde z, z) \Bigr)\exp(-\beta (F(\widetilde z)-F(z))) \notag\\
  &=  \E\Bigl(\varphi(q^{K_T},q^{0}) \Big[\left(\exp\left(-\beta \mathcal{W}\left((q^{k}, p^{k+1/4}, p^{k+3/4})_{0\leq k\leq K_T-1}, q^{K_T}\right) \right)  \rho(\xi(q^{K_T}),\xi(q^{0}))\right)\notag\\
  &\qquad \qquad \wedge \, \rho(\xi(q^{0}), \xi(q^{K_T}))\Big] \Big|  (\xi(q^{0}), \xi(q^{K_T}))=(\widetilde z, z) \Bigr)\exp(-\beta (F(\widetilde z)-F(z))).\label{eq:proof_eq2}
\end{align}
In the last step, we used that the work function $\mathcal{W}$ is symmetric with respect to momentum reversal since $\widetilde{H}(q,p)=\widetilde{H}(q,-p)$. 
Using \eqref{eq:proof_eq1} and \eqref{eq:proof_eq2}, we get (by just swapping the indices $z$ and $\widetilde{z}$ in the third equality)
\begin{align*}
  &\E\left(\varphi(Q_n,\widetilde{Q}_{n+1}) \left[1 \wedge \left( \exp(-\beta \mathcal{W}_{n+1})\frac{\rho(\widetilde{Z}_{n+1}, Z_n)}{\rho(Z_n,\widetilde{Z}_{n+1})}\right)\right]\right) \\
  &=\int_{z, \widetilde z} \E\Bigl(\varphi(q^{K_T},q^{0}) \Big[\left(\exp\left(-\beta \mathcal{W}\left((q^{k}, p^{k+1/4}, p^{k+3/4})_{0\leq k\leq K_T-1}, q^{K_T}\right) \right)  \rho(\xi(q^{K_T}), \xi(q^{0})) \right)   \\
  &\qquad \qquad \wedge \, \rho(\xi(q^{0}), \xi(q^{K_T}))\Big] \Big|  (\xi(q^{0}), \xi(q^{K_T}))=(\widetilde z, z) \Bigr) \exp(-\beta F(\widetilde{z}))  \, {\rm d}z \, {\rm d}\widetilde z.  \\
  &=\int_{z, \widetilde z} \E\Bigl(\varphi(q^{K_T},q^{0}) \left[1 \wedge  \left(\exp\left(-\beta \mathcal{W}\left((q^{k}, p^{k+1/4}, p^{k+3/4})_{0\leq k\leq K_T-1}, q^{K_T}\right) \right)  \frac{\rho(\xi(q^{K_T}), \xi(q^{0}))}{\rho(\xi(q^{0}), \xi(q^{K_T}))} \right) \right]   \\
  &\qquad \qquad \Big|  (\xi(q^{0}), \xi(q^{K_T}))=(\widetilde z, z) \Bigr) \rho(\widetilde z, z) \exp(-\beta F(\widetilde{z}))  \, {\rm d}z \, {\rm d}\widetilde z  \\
  &=\int_{z, \widetilde z} \E\Bigl(\varphi(q^{K_T},q^{0}) \left[1 \wedge  \left(\exp\left(-\beta \mathcal{W}\left((q^{k}, p^{k+1/4}, p^{k+3/4})_{0\leq k\leq K_T-1}, q^{K_T}\right) \right)  \frac{\rho(\xi(q^{K_T}), \xi(q^{0}))}{\rho(\xi(q^{0}), \xi(q^{K_T}))} \right) \right]  \\
  &\qquad \qquad \Big|  (\xi(q^{0}), \xi(q^{K_T}))=(z, \widetilde z) \Bigr) \rho(z, \widetilde z) \exp(-\beta F(z))  \, {\rm d}z \, {\rm d}\widetilde z.  \\
  & =\E\left(\varphi(\widetilde{Q}_{n+1}, Q_n) \left[1 \wedge \left( \exp(-\beta \mathcal{W}_{n+1})\frac{\rho(\widetilde{Z}_{n+1}, Z_n)}{\rho(Z_n,\widetilde{Z}_{n+1})}\right)\right]\right).  
\end{align*}
Finally,
    \begin{align*}
\E(\varphi(Q_n,Q_{n+1}))
&= \E\left(\varphi(\widetilde{Q}_{n+1},Q_n) \left[1 \wedge \left( \exp(-\beta \mathcal{W}_{n+1}) \frac{\rho(\widetilde{Z}_{n+1}, Z_n)}{\rho(Z_n,\widetilde{Z}_{n+1})}\right)\right]\right)\\
&\quad + \E\left(\varphi(Q_n, Q_n) \left[ 1-  1 \wedge \left(\exp(-\beta \mathcal{W}_{n+1})\frac{\rho(\widetilde{Z}_{n+1}, Z_n)}{\rho(Z_n,\widetilde{Z}_{n+1})}\right)\right]\right)\\
&=\E(\varphi(Q_{n+1},Q_n)),
\end{align*}
which concludes the proof of \cref{theo:reversbility}.
\section{Normalization of the steered dynamics \texorpdfstring{\eqref{eq:flucdissconstjarz1}-\eqref{eq:flucdissconstjarz2}}{ }}
\label{app:parameterization_scalar}
We explain in this section how the parameterization of the dynamics \eqref{eq:flucdissconstjarz1}-\eqref{eq:flucdissconstjarz2} can be simplified under certain assumptions.
We assume here that the mass matrix $M=M\mathrm{Id}$ is isotropic and furthermore write the velocity schedule as $v_z(t_k) = \widetilde{v}_z(t_k)/\Delta t$, with a velocity $\widetilde{v}_z(t_k)$ normalized by the timestep. The parameterization then simplifies considerably. With the dimensionless momentum $\widetilde{p}=\sqrt{\beta/M} p$, which is of identity covariance under the equilibrium distribution, the steered dynamics \eqref{eq:flucdissconstjarz1}-\eqref{eq:flucdissconstjarz2} on $(q,\widetilde{p})$ read
\begin{align}
  &         \dps    \widetilde{p}^{k+1/4} =   \widetilde{p}^{k} -\frac{\dt\gamma}{4M} P_M(q^k)P_M(q^k)^\top(\widetilde{p}^{k+1/4}+\widetilde{p}^{k}) 
       + \sqrt{ \frac{\dt\gamma}{2M}} P_M(q^k) {\mathcal G}^k \\[6pt]
  &\begin{cases}
    \dps  \widetilde{p}^{k+1/2} = \widetilde{p}^{k+1/4} - \displaystyle{\frac{\dt}{2}\sqrt{\frac{1}{\beta M}} \nabla (\beta \widetilde{V} (q^{k}))} + \nabla \xi(q^k) \widetilde{\lambda}^{k+1/2} &\\[6pt]
    \dps  q^{k+1} = q^{k} + \frac{\dt}{\sqrt{\beta M}} \widetilde{p}^{k+1/2}  &\\[6pt]
    \dps   \xi(q^{k+1})  = z(t_{k+1}) &(C_q) \\[6pt]
    \dps  \widetilde{p}^{k+3/4} = \widetilde{p}^{k+1/2} - \displaystyle{\frac{\dt}{2}\sqrt{\frac{1}{\beta M}} \nabla (\beta \widetilde{V} (q^{k+1}))} +\nabla \xi(q^{k+1}) \widetilde{\lambda}^{k+3/4}&\\[6pt]
    \dps   \nabla \xi (q^{k+1})^{T}  \widetilde{p}^{k+3/4} = \frac{\sqrt{\beta M}}{\Delta t}\widetilde{v}_z(t_{k+1})
    &(C_p)
\end{cases}\\
  &    \dps    \widetilde{p}^{k+1} = \widetilde{p}^{k+3/4} -\frac{\dt\gamma}{4M} P_M(q^{k+1})P_M(q^{k+1})^\top (\widetilde{p}^{k+3/4}+\widetilde{p}^{k+1})  + \sqrt{\frac{\dt\gamma}{2M}} P_M(q^{k+1}) {\mathcal G}^{k+1/2},
\end{align}
with renormalized Lagrange multipliers $\widetilde{\lambda}^{k+1/2}$ and $\widetilde{\lambda}^{k+3/4}$ that ensure that the constraints $(C_q)$ and $(C_p)$ are satisfied.
Here, we have deliberately moved $\beta$ into the gradients of the potential since it is the quantity $\beta V$ that defines the target distribution at equilibrium (note that the correction $\beta V_\mathrm{fix}$ is independent of $\beta$). With this, \textbf{Step 2} of the algorithm in \cref{subsubsec:algo} is fully defined by the schedule $z(t_k)$ and $\widetilde{v}_z(t_k)$ as well as the two parameters $\alpha_1 = \Delta t\gamma/(4M)$ and $\alpha_2 = \Delta t^2 /(\beta M)$. Note that while the Fixman term in general depends on the mass $M$, under the assumption of the isotropic mass matrix, the gradient $\nabla V_\mathrm{fix}(q)$ does not.

\section{Simplification for a linear CV}
\label{sec:linearCV}
To explicitly connect the presented algorithm to the previous one from Ref.~\onlinecite{SCHONLE2025113806}, consider the simplest case where the reaction coordinate is just given by the first $\ell$ degrees of freedom, so that we can write $q = (q_\mathrm{CV}, q_\perp)$ with $q_\mathrm{CV} = \xi(q) = (q_1,\dots,q_\ell)$ and $q_\perp = (q_{\ell+1}, \dots, q_d)$. Similarly, one can then write the momenta as $p=(p_\mathrm{CV}, p_\perp)$. 
Assume once again that 
$M=M\operatorname{Id}_d$ (with a slight abuse of notation). Then the Gram tensor is $G_M(q)=M^{-1}\operatorname{Id}_\ell$, and the projection operator for the momenta is just
\begin{align*}
    P_M(q) = \begin{pmatrix}
\mathbf{0}_{\ell\times\ell} & 0 \\
0 & \one_{(d-\ell)\times(d-\ell)}
\end{pmatrix} = P_M(q)^\top
\end{align*}
setting the momenta $p_\mathrm{CV}$ associated with the CV to zero. With the scalar parameters $\gamma$ and $\sigma$, one then has $\gamma_P(q) = P_M(q) \gamma P_M(q)^\top = \gamma P_M(q)$ and $\sigma_P(q)=P_M(q)\sigma$. Explicitly solving for the Lagrange multipliers, the resulting dynamics read (using the notation $\nabla_\perp V$ for the partial derivative of $V$ with respect to $q_\perp$)
\begin{align}
  &  \begin{cases}
       \dps    p_\mathrm{CV}^{k+1/4} =   p_\mathrm{CV}^{k}  \\
       \dps p_\perp^{k+1/4} =   p_\perp^{k} -\frac{\dt}{4M} \gamma(p_\perp^{k+1/4}+p_\perp^{k}) 
       + \sqrt{ \frac{\dt}{2}} \sigma {\mathcal G}_\perp^k
     \end{cases}'
     \label{eq:AlolinCVStep1}\\[6pt]
  &\begin{cases}
    \dps  p_\perp^{k+1/2} = p_\perp^{k+1/4} - \displaystyle{\frac{\dt}{2} \nabla_\perp V (q^{k})} &\\[6pt]
    \dps   q_\mathrm{CV}^{k+1}  = z(t_{k+1}) & \\[6pt]
    \dps  q_\perp^{k+1} = q_\perp^{k} + \frac{\dt}{M} \ p_\perp^{k+1/2}  &\\[6pt]
    \dps  p_\mathrm{CV}^{k+3/4} = M v_z(t_{k+1})&\\[6pt]    
    \dps  p_\perp^{k+3/4} = p_\perp^{k+1/2} - \displaystyle{\frac{\dt}{2} \nabla_\perp V (q^{k+1})}&
   \end{cases}\label{eq:AlolinCVStep
   2}\\
  &   \begin{cases}  
  \dps    p_\mathrm{CV}^{k+1} = p_\mathrm{CV}^{k+3/4}, \\
  \dps    p_\perp^{k+1} = p_\perp^{k+3/4} -\frac{\dt}{4M} \gamma (p_\perp^{k+3/4}+p_\perp^{k+1}) + \sqrt{\frac{\dt}{2}} \sigma {\mathcal G}_\perp^{k+1/2}. 
      \end{cases} \label{eq:AlolinCVStep3}
\end{align}
This corresponds to the underdamped Langevin dynamics on $(q_\perp, p_\perp)$ with a time-varying potential $V_k(q^k_\perp)$, which becomes obvious by rewriting $V(q^k) = V(q^k_\mathrm{CV}, q^k_\perp) \equiv V_k(q^k_\perp)$, since the CV-space coordinates $q_\mathrm{CV}$ are fully determined by the schedule. One might therefore also see the CV coordinates $q_\mathrm{CV}^k$ here as a time-varying alchemical parameter.

In the following we consider two limiting cases, the overdamped limit and the deterministic limit. 
\subsection{Overdamped limit}
\label{subsec:limit_overdamped}
For the choice $\gamma=2$ and $M=\frac{\Delta t}{2}$ (and therefore $\sigma=\frac{2}{\sqrt{\beta}}$), the forward dynamics \eqref{eq:AlolinCVStep1}-\eqref{eq:AlolinCVStep3} simplify to overdamped Langevin dynamics
\begin{align}
  &\begin{cases}
    \dps   q_\mathrm{CV}^{k+1}  = z(t_{k+1}) & \\[6pt]
    \dps  q_\perp^{k+1} = q_\perp^{k} + \sqrt{ \frac{2\dt}{\beta}} {\mathcal G}_\perp^k - \displaystyle{\dt \nabla_\perp V (q^{k})}.
   \end{cases}
\end{align}
In terms of the parameters of the normalized algorithm introduced in \cref{subsec:Parameterization}, this parameter choice corresponds to setting $\alpha_1=1$.
Using \eqref{eq:work_definition}, the associated work is calculated as
\begin{align*}
    \mathcal{W}^{K_T} = V(q^{K_T}) - V(q^{0}) + \frac{1}{4 \dt}\sum_{k=0}^{K_T-1} \left( \left\|q_\perp^{k} - q_\perp^{k+1} + \displaystyle{\dt \nabla_\perp V (q^{k+1})}\right\|_2^2 - \left\|q_\perp^{k+1} - q_\perp^{k} + \displaystyle{\dt \nabla_\perp V (q^{k})}\right\|_2^2\right).
\end{align*}
For the overall accept/reject probability in the MCMC algorithm, this gives 
\begin{align*}
    \exp\left(-\beta\mathcal{W}^{K_T}\right) = \dps \frac{e^{-\beta V(q^{K_T})}}{e^{-\beta V(q^{0})}} \frac{\exp\left(-\frac{\beta}{4\dt}\sum_{k=0}^{K_T-1}\left\|q_\perp^{k} - q_\perp^{k+1} + \displaystyle{\dt \nabla_\perp V (q^{k+1})}\right\|_2^2\right)}{\exp\left(-\frac{\beta}{4\dt}\sum_{k=0}^{K_T-1}\left\|q_\perp^{k+1} - q_\perp^{k} + \displaystyle{\dt \nabla_\perp V (q^{k})}\right\|_2^2\right)}.
\end{align*}
This algorithm is very closely related to the \textbf{asymmetric} algorithm from Ref.~\onlinecite{SCHONLE2025113806}, with the only difference that forward and backward path have swapped their definitions. 
\subsection{Deterministic limit}
\label{subsec:limit_determ}
In the deterministic limit, we have $\gamma=\sigma=0$. The forward dynamics then simplify to
\begin{align}
  &\begin{cases}
    \dps   q_\mathrm{CV}^{k+1}  = z(t_{k+1}) & \\[6pt]
    \dps  q_\perp^{k+1} = q_\perp^{k} + \frac{\dt}{M} \ p_\perp^{k} - \displaystyle{\frac{\dt^2}{2M} \nabla_\perp V (q^{k})}  &\\[6pt]
    \dps  p_\mathrm{CV}^{k+1} = M v_z(t_{k+1})&\\[6pt]    
    \dps  p_\perp^{k+1} = p_\perp^{k} - \frac{\dt}{2} \left(\nabla_\perp V (q^{k})+\nabla_\perp V (q^{k+1})\right) .&
   \end{cases}
\end{align}
This corresponds to the deterministic dynamics for the CV coordinate and Verlet integration for the orthogonal coordinates. The associated work is simply
\begin{align*}
    \W^{K_T} &= \sum_{k=0}^{K_T-1} \left(H(q^{k+1},p^{k+1}) - H(q^{k},p^{k})\right) \\
    &= H(q^{K_T},p^{K_T}) - H(q^0,p^0).
\end{align*}
For the overall MCMC algorithm defined earlier, this means that the proposal $\widetilde{Q}_{n+1} = q^{K_T}$ is accepted with probability the minimum of 1 and
\begin{align*}
    \frac{e^{-\beta H(q^{K_T},p^{K_T})}}{e^{-\beta H(q^{0},p^{0})}} \frac{\rho(\widetilde{Z}_{n+1}, Z_n)}{\rho(Z_n,\widetilde{Z}_{n+1})} = \frac{e^{-\beta V(q^{K_T})}}{e^{-\beta V(q^{0})}} \frac{e^{-\beta \|p^{K_T}\|^2/(2M)}}{e^{-\beta \|p^{0}\|^2/(2M)}}\frac{\rho(\widetilde{Z}_{n+1}, Z_n)}{\rho(Z_n,\widetilde{Z}_{n+1})} = \frac{e^{-\beta V(q^{K_T})}}{e^{-\beta V(q^{0})}} \frac{e^{-\beta \|p_\perp^{K_T}\|^2/(2M)}}{e^{-\beta \|p_\perp^{0}\|^2/(2M)}}\frac{\rho(\widetilde{Z}_{n+1}, Z_n)}{\rho(Z_n,\widetilde{Z}_{n+1})}
\end{align*}
The last equality, where the norm of the momentum is only considered on the orthogonal degrees of freedom, comes from the assumption that the initial and final velocities in CV space are equal. This is the relaxed version of \eqref{eq:zerovelocity_condition} when there is no Fixman term to be considered, see \cref{rem:constant_fixman}.

\section{Details of the wavelet basis for the \texorpdfstring{$\phi^4$}{Phi4} model}
\label{sec:wavelet_details}
As mentioned in \cref{subsec:phi4}, we use a basis of Haar-wavelets to represent the state of the $\phi^4$ model in our simulation. For an introduction to wavelet transformations we refer the reader to Refs.~\onlinecite{chui1992,mallat_wavelet_2008}, for example. Here, we only give a very practical definition. 
For a one-dimensional input signal $\{\phi_i\}_{i=1}^N$ of size $N=2^n$, set $\varphi^{(0)}_i = \phi_i$ for $i \in \{1,\ldots N\}$ and repeatedly apply the following transformation $n$ times for $j \in \{1,\ldots n-1\}$:
\begin{align*}
    \varphi^{(j+1)}_i = \frac{\varphi^{(j)}_{2i+1} + \varphi^{(j)}_{2i}}{\sqrt{2}},\quad 
    \overline{\varphi}^{(j+1)}_i = \frac{\varphi^{(j)}_{2i+1} - \varphi^{(j)}_{2i}}{\sqrt{2}}
\end{align*}
with $i \in \{1,\ldots 2^{n-j}\}$. 
This transformation constitutes a bijective linear map between the input $\{\phi_i\}_{i=1}^N$ and the wavelet fields $\varphi^{(n)}$ and $\{ \overline{\varphi}^{j}\}_{j=1}^n$. The coarsest-scale wavelet field has a single component entry:
\begin{align*}
  \varphi^{(n)}= \frac{1}{(\sqrt{2})^n}\sum_{i=1}^N \phi_i = \frac{1}{\sqrt{N}}\sum_{i=1}^N \phi_i.
\end{align*}
For the $\phi^4$ model, since $\varphi^{(n)}=\sqrt{N}\cdot \overline{\phi}$, this corresponds to the magnetization $\overline{\phi}$ up to a prefactor. In practice, we therefore run the dynamics \eqref{eq:AlolinCVStep1}-\eqref{eq:AlolinCVStep3} with $q_\mathrm{CV}=\overline{\phi}$ and $q_\perp$ the remaining wavelet components $\{ \overline{\varphi}^{j}\}_{j=1}^n$. 
\section{Details of the dimer model}
\subsection{Computation of the Fixman term}
\label{subsec:dimer_fixman}
Recall that we use the reaction coordinate defined in \eqref{eq:dimer_reaction_coordinate}:
\begin{align*}
    \xi(q) = \frac{|q_1-q_2| - r_0}{2w}.
\end{align*}
We write the gradient of the reaction coordinate $\nabla \xi(q)\in \mathbb{R}^{d\times 1}$ in the form
\begin{align*}
    \nabla \xi(q) = \frac{1}{2w}\left( \begin{array}{c}
    \dps \frac{q_1-q_2}{|q_1-q_2|} \\ [10pt]
    \dps -\frac{q_1-q_2}{|q_1-q_2|} \\
    0 \\
    \vdots \\
    0
\end{array} \right) = \frac{1}{2w} \left( \begin{array}{c}
    e_{12} \\
    -e_{12} \\
    0 \\
    \vdots \\
    0
\end{array} \right)
\end{align*}
with the normalized vector $e_{12} =  \frac{q_1-q_2}{|q_1-q_2|}\in \mathbb{R}^2$.
The Gram tensor is then given by
\[
G_M(q) = \nabla\xi(q)^\top M^{-1} \nabla\xi(q) = \frac{1}{M}|\nabla \xi(q)|^2  = \frac{1}{2 M w^2},
\]
leading to a Fixman term (defined in \eqref{eq:deffixman}) that is constant:
\[
V_\mathrm{fix}(q) = \frac{1}{2\beta} \log \det G_M(q) = -\frac{1}{2\beta} \log(2w^2 M).
\]
\subsection{Lagrange multiplier for position constraint}
\label{subsec:lagrane_multi_dimer}
From \eqref{eq:Verletconstswitched}, the enforcement of the position constraint can be rewritten as
\begin{equation*}
\left\{
\begin{aligned}
    \dps q^{k+1} &= \tilde{q}^{k+1} + \dt \ M^{-1} \nabla \xi(q^k) \lambda^{k+1/2}, \\
    \dps   \xi(q^{k+1})  &= z(t_{k+1}), \quad\quad (C_q)
\end{aligned}
\right.
\end{equation*}
with $\tilde{q}^{k+1} = q^{k} +  \dt \ M^{-1} p^{k+1/4} -  \ M^{-1} \displaystyle{\frac{\dt^2}{2} \nabla \widetilde V (q^{k})}$.
Inserting the first equation into the second and using the definition 
\eqref{eq:dimer_reaction_coordinate} of $\xi$ leads to a quadratic equation for $\lambda^{k+1/2}$ with the solution
\begin{align*}
    \lambda^{k+1/2}_\pm = \frac{wM}{\dt} \left(-e_{12}^k \cdot (\tilde{q}^{k+1}_1-\tilde{q}^{k+1}_2) \pm \sqrt{\Delta} \right),
\end{align*}
where $e_{12}^k=\frac{q_1^k-q_2^k}{|q_1^k-q_2^k|}\in \mathbb{R}^2$ is a normalized vector and the discriminant is given by
\[
\Delta = \left[e_{12}^k \cdot (\tilde{q}^{k+1}_1-\tilde{q}^{k+1}_2)\right]^2 - \left[|\tilde{q}^{k+1}_1-\tilde{q}^{k+1}_2|^2 - (2wz(t_{k+1}) + r_0)^2\right].
\]
We choose the smallest value for $\lambda^{k+1/2}$, which corresponds to choosing $\lambda^{k+1/2}_+$ if $e_{12}^k \cdot (\tilde{q}^{k+1}_1-\tilde{q}^{k+1}_2) > 0$ and $\lambda^{k+1/2}_-$ in the opposite case. This choice is consistent with ensuring that the value of the Lagrange multiplier is of order $\Delta t$ for small timesteps and ensures that \eqref{eq:rev_forwbackw_flows} is satisfied.
\subsection{Free Dimer not interacting with solvent particles}
\label{appendix:dimer_free}
Considering the dimer model and the collective variable introduced in \cref{subsec:dimer}, the free energy can be computed analytically for the special case when the solvent particles do not interact with the dimer.
We have 
\begin{align*}
    F(z) &= -\frac{1}{\beta} \log \int e^{-\beta V(q)} \delta_{\xi(q)-z}(\mathrm d q) \\
    &= -\frac{1}{\beta} \log \int e^{-\beta V(q)} |\nabla\xi(q)|^{-1} \sigma^M_{\Sigma(z)}(\mathrm d q) \\
    &=-\frac{1}{\beta} \log \left(e^{-\beta V_D(2w z + r_0)} (2w z + r_0)\right) + \mathrm{const.} \\
    &= V_D(2w z + r_0) - \frac{1}{\beta}\log(2w z + r_0) + \mathrm{const.}
\end{align*}
where we ignored constants that just shift the free energy. For the second equality, consider that if $q_1=0$ is fixed, the values of $q_2$ are constrained to a circle of radius $r=2w z + r_0$, which introduces a Jacobian term.
\section{Details of the polymer model}
\label{sec:AppPolymer}
We simulate the polymer system introduced in \cref{subsec:polymer} at a temperature $1/\beta=1.3$ using the jaxMD package\cite{jaxmd2020}. The system is contained in a three-dimensional box of length $L=7.47$ with periodic boundaries, with $N_\mathrm{solvent}=200$ solvent particles, which yields a solvent density of $N_\mathrm{solvent}/L^3=0.48$ particles per unit volume. The parameters of the polymer potentials are $d_1=1$, $d_2=1.63$ (to fix the bonding angle at 109\degree), $k_1=k_2=500$, $d_e=4.62$, $a_e=0.8$ and $k_e=9.1$, corresponding to an energy barrier of $7 \beta^{-1}$. The Lennard-Jones potential is essentially given by
\begin{align*}
    V_{\mathrm{LJ}} = 4\epsilon \left[\left(\frac{r^*}{r}\right)^{12} - \left(\frac{r^*}{r}\right)^6\right],
\end{align*}
with $r^*=\epsilon=1$, and thus very similar to the WCA potential used for the dimer. In practice, however, we use a modified version given by jaxMD that smoothly decays to zero between an onset radius $r_{\mathrm{onset}}=2.0$ and a cutoff radius $r_\mathrm{cutoff}=2.5$.

To implement \textbf{Step 1} of the algorithm in \cref{subsec:algorithm}, we use an independent proposal sampler in CV space for both choices of CV considered here. For the first CV, i.e. the one-dimensional end-to-end distance, the proposal sampler is given by a Gaussian mixture model with two modes of equal weight and width $\sigma=0.17$ at positions $z=3.53$ and $z=5.65$.
For the second, 27-dimensional CV, i.e. the Cartesian coordinates of the polymer beads, we use a normalizing flow to propose distances between the polymer beads. These distances are the eight nearest neighbor and seven second-nearest neighbor distances as well as the six distances between the first bead and beads number 4 to 9, adding up to 21 distances in total. To complete such a proposed set of distances to a full proposal of polymer coordinates, one needs to additionally specify the global translation and rotation, as well as six sign values $\eta_i\in\{\pm 1\}, \,i\in\{1,\ldots 6\}$ (explained in the following). The global translation is fixed by always positioning the central bead of the polymer at the center of the box (which we do throughout the simulation) and the global rotation is sampled uniformly. The necessity of the six sign values $\eta_i$ can be understood from the triangulation process used to determine the Cartesian polymer bead positions. Assuming one has already placed three beads with the desired distances, there will be two possible solutions (if there is a solution) to place the fourth bead at the right distance to the previous three (two intersection points of three spheres), which corresponds to choosing the sign of the dihedral angle. For a given set of 21 distances as in our case, there are six such sign values to choose, and we sample them uniformly at random. We compute the Jacobian of this reconstruction using automatic differentiation to arrive at a probability measure in the CV space of polymer positions $z\in \mathbb{R}^{27}$. The flow will sometimes produce sets of distances for which no solution to the triangulation problem exists and these samples are then directly rejected. Since this restriction only changes the normalization of the flow proposal measure, it does not enter into the overall formula of the accept/reject probability in \textbf{Step 3} of the algorithm in \cref{subsec:algorithm}. 
The flow is implemented using the FlowJax package\cite{ward2023flowjax}, with one affine whitening layer and 20 blocks of RationalQuadraticSpline transformations\cite{durkan2019neuralsplineflows} with 32 knots on the interval $[-4,4]$ and internal neural networks of depth 2 and width 30 (see the reference for further explanations). It was trained on a large dataset of one million samples obtained from a long Langevin simulation over 200 epochs with a batch size of 40,000 and a learning rate of $10^{-3}$ using the Adam optimizer with otherwise default parameters. 
\section{Additional numerical results}
\subsection{Gaussian tunnel}
\label{app:subsec:gaussiantunnel}
To illustrate the optimal choice of parameters when running the algorithm on the example of the Gaussian tunnel from \cref{subsec:gauss_tunnel}, we show the inverse mode jump cost (introduced in the main text) for different parameter values in \cref{fig:tunnel_optim}. The optimal value of $\alpha_2$ is roughly the same for all choices of $\alpha_1$, whereas the optimal number of intermediate steps $K_T$ increases with $\alpha_1$. Clearly, the optimal performance is observed for the deterministic dynamics, namely $\alpha_1=0$.
\begin{figure}[hb]
    \centering
    \includegraphics{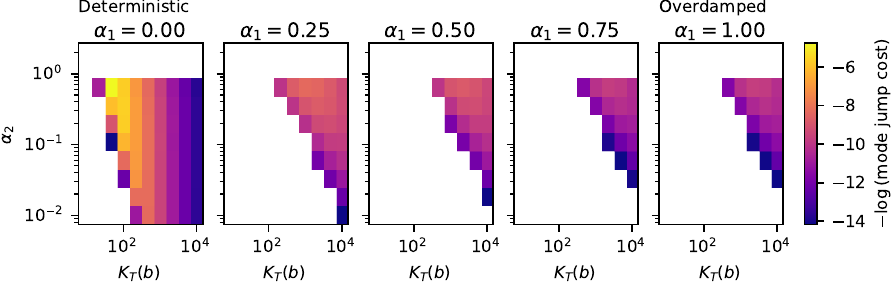}
    \caption{Inverse mode jump cost when running the algorithm on the example of the Gaussian tunnel introduced in \cref{subsec:gauss_tunnel} for different parameter values $\alpha_1$, $\alpha_2$ (introduced in \cref{subsec:Parameterization}). The number $K_T(b)$ represents the number of steps for a jump of distance $b$ and is related to the inverse of the dimensionless velocity $\widetilde{v}$ via $K_T(b) = \mathrm{ceil}(b/\widetilde{v})$. White space indicates that at least one chain never switched mode.}
    \label{fig:tunnel_optim}
\end{figure}

For the deterministic case, there is a simple continuous-time limit of $\Delta t \longrightarrow 0$, in our parameterization corresponding to $\alpha_2 \longrightarrow 0$, where the only relevant quantity determining the acceptance rate is the normalized transition time $T/\sqrt{\beta M}$. This is indeed what we observe in \cref{fig:dumbell_determ_transition_time}, where for small enough $\alpha_2$, the acceptance rate only depends on this physical time scale. For numerical performance, however, the relevant quantity is of course the acceptance rate normalized by the numerical cost.

\begin{figure}[ht]
    \centering
    \includegraphics{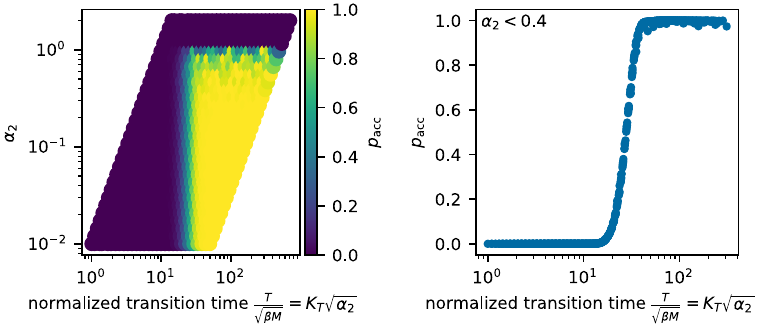}
    \caption{Acceptance rate for a fixed jump from $z=0$ to $\widetilde{z}=b$ for the Gaussian tunnel with the deterministic dynamics $\alpha_1=0$. The results shown here are the same as in the top left panel of 
    \cref{fig:Gaussians} with a different choice of axes. The right panel shows all simulations with $\alpha_2<0.4$ combined in one plot.}
    \label{fig:dumbell_determ_transition_time}
\end{figure}

\subsection{Gaussian tunnel with a non-linear collective variable}
\label{subsec:AppNonlinGauss}
We consider here a version of the Gaussian tunnel introduced in \cref{subsec:gauss_tunnel} in dimension $d=10$ with a non-linear collective variable. With the same underlying probability distribution, we consider the CV $\xi(z) = \tanh\left(\frac{z}{b}\right)\cdot \frac{b}{\tanh(1)}$. Using the push-forward of the probability density $\nu_\mathrm{CV}(z)$ under this transformation as the proposal distribution (and thus exactly the marginal of the target distribution), we show in \cref{fig:biased_simulation} how our algorithm with the schedule \eqref{eq:schedule_cos} leads to unbiased results, whereas a linear schedule that violates \cref{eq:zerovelocity_condition} or the omission of the Fixman term both lead to a bias. To test an inappropriate schedule, we simply use one with constant velocity, i.e. \eqref{eq:schedule_function_z} and \eqref{eq:schedule_function_vz} with $f(\tau)=\tau$. The Fixman term is given by $V_\mathrm{fix}(z,x^\perp)=\frac{1}{\beta}\log(\xi'(z))$ and just depends on $z$. We observe that upon omission of the Fixman term, the resulting distribution is indeed biased: its marginal along $\xi$ is proportional to $\nu_\mathrm{CV}(z)\cdot e^{\beta V_\mathrm{fix}(z)}$.
\begin{figure}[hb]
    \centering
    \includegraphics{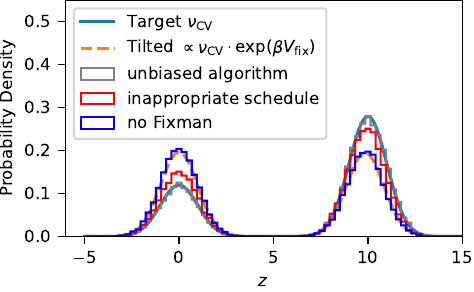}
    \caption{Simulation of the Gaussian tunnel example introduced in \cref{subsec:gauss_tunnel} (with smaller dimension $d=10$) using steered moves in a non-linear CV, $\xi(z) = \tanh\left(\frac{z}{b}\right)\cdot \frac{b}{\tanh(1)}$. We show the results for an unbiased algorithm that follows the method specified in \cref{sec:sampling_algo}, including the Fixman term \eqref{eq:deffixman} and using the cosine-schedule \eqref{eq:schedule_cos}. For comparison, we also show results for modified algorithms where (i) a linear schedule with constant velocity was used (violating \cref{eq:zerovelocity_condition}) and (ii) the Fixman term was omitted. The proposal distribution in CV-space is exactly the target distribution, histograms are shown over 20 chains and 10,000 iterations with $\alpha_1=0$, $\alpha_2=0.67$ and inverse velocity $1/\widetilde{v}=2.5$, corresponding to $K=25$ steps for a jump between the two mode centers.}
    \label{fig:biased_simulation}
\end{figure}
\clearpage
\subsection{\texorpdfstring{$\phi^4$}{Phi4} Model}
\label{subsec:AppPhi4}
To complement the results shown in \cref{fig:phi_comparison} we ran the algorithm on the $\phi^4$ model from \cref{subsec:phi4} for additional values of $\alpha_1$ and show the associated mode-jump cost in \cref{fig:Phi4optim}. 
\begin{figure}[h]
    \centering
    \includegraphics{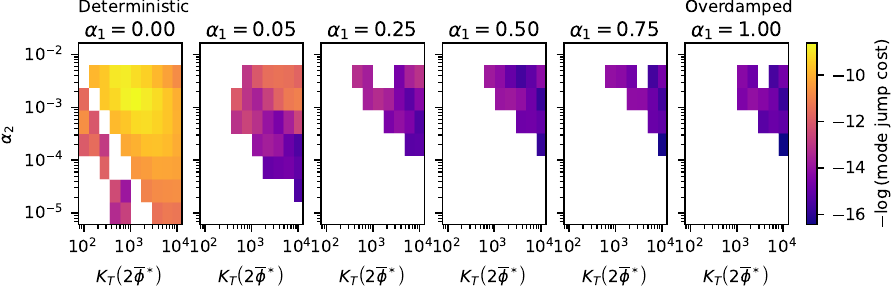}
    \caption{Mode jump cost for the $\phi^4$ model from \cref{subsec:phi4} for different parameters, estimated from running 20 MCMC chains over 10,000 iterations. The $x$-axis shows the number of steps for a steered schedule over the distance $2\overline{\phi}^*$ of the two modes}
    \label{fig:Phi4optim}
\end{figure}

\subsection{Dimer in a solvent}
\label{subsec:dimer_appendix}
In addition to \cref{fig:dimer_optimalpha0}, we ran the algorithm on the dimer model for additional values of $\alpha_1$. The associated inverse mode-jump cost is shown in \cref{fig:Dimeroptim}. Apart from the deterministic algorithm ($\alpha_1=0$), almost no mode switches were observed within the computational budget.
\begin{figure}[h]
    \centering
    \includegraphics{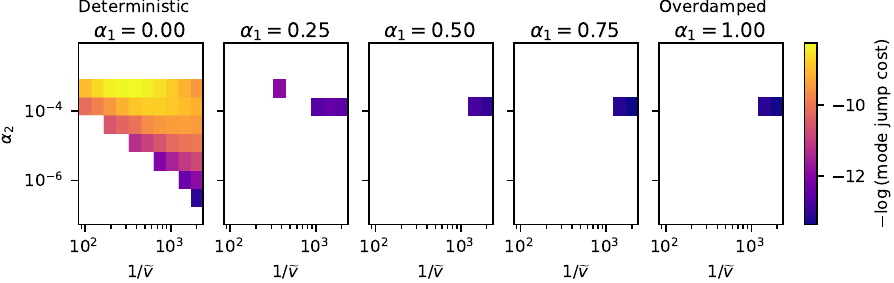}
    \caption{Mode jump cost for the Dimer model from \cref{subsec:dimer} for different parameters, estimated from running 20 MCMC chains over 5,000 iterations.  The inverse dimensionless velocity $1/\widetilde{v}$ corresponds to the number of steps for a steered schedule over a unit distance. White space indicates that at least one chain never switched mode during observation time. The left-most panel shows the same results as \cref{fig:dimer_optimalpha0} (with a different color scale).}
    \label{fig:Dimeroptim}
\end{figure}
\end{document}